\documentclass[aos,preprint]{imsart}

\RequirePackage[OT1]{fontenc}
\RequirePackage[leqno,cmex10]{amsmath}
\RequirePackage{amsthm}
\RequirePackage[numbers]{natbib}
\RequirePackage[colorlinks,citecolor=blue,urlcolor=blue]{hyperref}

\usepackage{mathrsfs,dsfont}
\usepackage{latexsym}
\usepackage{amssymb}
\usepackage[english]{babel}
\usepackage[latin1]{inputenc}
\usepackage{color}
\usepackage{amsfonts}
\usepackage{bbm}
\usepackage{enumerate}
\RequirePackage{xspace}
\usepackage[ruled, section]{algorithm}
\usepackage{algpseudocode}
\usepackage[paper=letterpaper,dvips,top=1.5in,left=1.5in,right=1.5in,
   foot=1cm,bottom=1in]{geometry}
\usepackage{subfigure}
\usepackage {graphicx}
\usepackage {pgf}
\usepackage{yhmath}
\usepackage[normalem]{ulem}

\usepackage{multido}
\usepackage{pstricks,pst-plot,pstricks-add,pst-math}

\usepackage{mathtools}

\usepackage{enumerate}

\numberwithin{equation}{section}
\numberwithin{figure}{section}
\numberwithin{table}{section}
\setcounter{tocdepth}{2}
\sloppy

\theoremstyle{plain}
\newtheorem{thm}{Theorem}[section]
\newtheorem{lem}[thm]{Lemma}

\theoremstyle{definition}

\newcommand{\R}{ \mathds R}
\newcommand{\Z}{ \mathds Z}

\theoremstyle{remark}
\newtheorem{rem}[thm]{Remark}
\newcommand{\ol}[1]{\left\langle #1\right\rangle}

\renewcommand{\le}{\leqslant}
\renewcommand{\ge}{\geqslant}

\newcommand{\eps}{\varepsilon}

\newcommand{\norm}[1]{\left\Vert#1\right\Vert}

\newcommand{\eg}{\emph{e.g.,}\ }

\let\ga=\alpha \let\gb=\beta \let\gc=\gamma \let\gd=\delta \let\gee=\epsilon
     \let\gl=\lambda

\let\gC=\Gamma


\newcommand{\cC}{\mathcal{C}}

\newcommand{\cN}{\mathcal{N}}

\newcommand{\V}[1]{\ensuremath{\boldsymbol{#1}}\xspace}


\newcommand{\vzero}{\mathbf{0}}\newcommand{\vone}{\mathbf{1}}

\newcommand{\vA}{\mathbf{A}}\newcommand{\vB}{\mathbf{B}}
\newcommand{\vD}{\mathbf{D}}

\newcommand{\vM}{\mathbf{M}}
\newcommand{\vP}{\mathbf{P}}\newcommand{\vQ}{\mathbf{Q}}
\newcommand{\vR}{\mathbf{R}}
\newcommand{\vU}{\mathbf{U}}\newcommand{\vV}{\mathbf{V}}\newcommand{\vW}{\mathbf{W}}
\newcommand{\vX}{\mathbf{X}}\newcommand{\vZ}{\mathbf{Z}}

\newcommand{\vx}{\mathbf{x}}
\newcommand{\vy}{\mathbf{y}}\newcommand{\vz}{\mathbf{z}}

\newcommand{\dR}{\mathds{R}}

\newcommand{\sA}{\mathscr{A}}

\newcommand{\sS}{\mathscr{S}}

\DeclareMathOperator{\E}{\mathds{E}}
\DeclareMathOperator{\pr}{\mathds{P}}

\newcommand{\sM}{\mathscr{M}}



\def\beq{ \begin{equation} }
 \def\eeq{ \end{equation} }
 \def\beqx{ \begin{equation*} }
 \def\eeqx{ \end{equation*} }
 \def\beqa{\begin{eqnarray}}
 \def\eeqa{\end{eqnarray}}
 \def\beqax{\begin{eqnarray*}}
 \def\eeqax{\end{eqnarray*}}

\newcommand{\Pro}{ \mathbb P}

\newcommand{\gre}{\epsilon}

\newcommand{\GD}{\boldsymbol\Delta}

\usepackage[T3,T1]{fontenc}
\DeclareSymbolFont{tipa}{T3}{cmr}{m}{n}
\DeclareMathAccent{\inv}{\mathalpha}{tipa}{16}




\begin{document}

\begin{frontmatter}
\title{Spectral Clustering for Multiple Sparse Networks: I}
\runtitle{Spectral Clustering for Multiple Sparse Networks}

\begin{aug}
\author{\fnms{Sharmodeep} \snm{Bhattacharyya}\thanksref{m2}\ead[label=e1]{bhattash@science.oregonstate.edu}}
\and
\author{\fnms{Shirshendu} \snm{Chatterjee}\thanksref{m1}\ead[label=e2]{shirshendu@ccny.cuny.edu}}

\runauthor{Bhattacharyya and Chatterjee}

\affiliation{City University of New York\thanksmark{m1}}
\affiliation{Oregon State University\thanksmark{m2}}

\address{Department of Statistics\\
239 Weniger Hall\\
Corvallis, OR, 97331\\
\printead{e1}}
\address{City University of New York, City College\\
New York, NY, 10031
\printead{e2}}

\end{aug}

\begin{abstract} 
\indent Although much of the focus of statistical works on networks has been on static networks, multiple networks are currently becoming more common among network data sets. Usually, a number of network data sets, which share some form of connection between each other are known as multiple or multi-layer networks. We consider the problem of identifying the common community structures for multiple networks. We consider extensions of the spectral clustering methods for the multiple sparse networks, and give theoretical guarantee that the spectral clustering methods produce consistent community detection in case of both multiple stochastic block model and multiple degree-corrected block models. The methods are shown to work under sufficiently mild conditions on the number of multiple networks to detect associative community structures, even if all the individual networks are sparse and most of the individual networks are below community detectability threshold. We reinforce the validity of the theoretical results via simulations too. 
\end{abstract}

\begin{keyword}[class=AMS]
\kwd[Primary ]{62F40}
\kwd{62G09}
\kwd[; secondary ]{62D05}
\end{keyword}

\begin{keyword}
\kwd{Networks}
\kwd{Spectral Clustering}
\kwd{Dynamic Networks}
\kwd{Squared Adjacency Matrix}
\end{keyword}

\end{frontmatter}

\section{Introduction}
\label{sec_intro}
The analysis of networks has received a lot of attention, not only from statisticians but also from social scientists, mathematicians, physicists and computer scientists. Several statistical methods have been applied to analyze network datasets arising in various disciplines. Examples include networks originating from biosciences such as gene regulation networks \cite{emmert2014gene}, protein protein interaction networks \cite{de2010protein}, structural \cite{rubinov2010complex} and functional networks \cite{friston2011functional} of brain and epidemiological networks \cite{reis2007epidemiological}; networks originating from social media such as Facebook, Twitter and LinkedIn \cite{faloutsos2010online}; citation and collaboration networks \cite{lehmann2003citation}; information and technological networks such as web-based networks, power networks \cite{pagani2013power} and cell-tower networks \cite{isaacman2011identifying}.

Most of the research in the statistics community focuses on developing methods for addressing statistical inference questions based on a single observed network as data. We will refer to such single networks as {\it static networks} in this paper. 
In this paper, we focus on the problem of community detection in networks. The problem of community detection can be considered a sub-problem of vertex clustering problem. In the vertex clustering problem, the goal is grouping the vertices of the graph based on some common properties. In community detection problem, the main goal is grouping the vertices of the graph such that the average number of connections within the group are \emph{either significantly more or less} than the average number of connections between groups. Communities in networks are usually called \emph{associative}, if the average number of connections within communities is \emph{significantly greater} than the average number of connections between communities. 
In this paper, we shall focus on finding \emph{associative community structures}. More rigorous definition of the associative communities will be given later in the paper in section \ref{sec_latent_var_models}. 

Several random graph models has been proposed in the literature, where mathematically rigorous definition of community labels for vertices are given. Examples of random graph models for static networks with community structure include stochastic block models \cite{holland1983stochastic}, degree-corrected block models \cite{karrer2011stochastic} and random dot product models \cite{young2007random}.
A number of methods has also been proposed in the literature for community detection methods (see \cite{fortunato2010community}) for reviews) for static networks. The methods can be broadly classified into two types - \emph{model based approaches}, where, the methods has been developed under the regime of a specific random graph model (e.g.~different likelihood based methods \cite{bickel2009nonparametric}) and model agnostic approaches, where, the methods has been developed without the help of a specific random graph model. (e.g.~modularity based methods \cite{newman2004finding}, spectral clustering methods \cite{rohe2011spectral}, label propagation \cite{gregory2010finding}). We focus on spectral clustering methods for community detection in this paper. 

Since its introduction in \cite{MR0318007}, spectral analysis of various matrices associated to graphs has become one of the most widely used clustering techniques in statistics and machine learning. The advantages of spectral clustering based methods are manifold. Firstly, it is a model agnostic method. So, the spectral clustering methods are not based on any specific random graph model. Secondly, it is highly scalable as the main numerical procedure within it is matrix factorization and a lot of research effort has been employed for scalable implementation of the matrix factorization algorithms in the numerical analysis literature. Thirdly, accuracy of spectral clustering methods in recovering communities has also been shown under various probabilistic models \cite{rohe2011spectral}. In the context of finding clusters in a static unlabeled graph, a number of variants of spectral clustering have been proposed. These methods involve spectral analysis of either the adjacency matrix or some other derived matrix (e.g.~one of the Laplacian matrices) of the graph.  See \cite{shi2000normalized}, \cite{ng2002spectral}, \cite{MR2396807}, \cite{MR2893856}, \cite{lei2014consistency}, \cite{bhattacharyya2014community} \cite{} for some of the research in this regard. Many of these spectral clustering methods have also been theoretically proven to be effective in identifying  communities of static networks, if the networks are generated from some form of exchangeable random graph models \cite{sussman2012consistent}. 

Although much of the focus of statistical works on networks has been on static networks, \emph{multiple networks} are currently becoming more common among network data sets. Usually, a number of network data sets, which share some form of connection between each other are known as \emph{multiple networks}. Various types of multiple networks are becoming common in the literature (see \cite{boccaletti2014structure} for review). Time-evolving networks are one of the common ways of obtaining multiple networks. Other examples include multi-layer networks, multi-dimensional networks, multiplex networks, multi-level networks and hypergraphs, to name a few.
Time-evolving networks are becoming common in many application domains ranging from biological networks (e.g.~genetic \cite{hecker2009gene} or neurological networks \cite{leonardi2013principal}) to social networks \cite{tantipathananandh2007framework}. 

There has also been quite a bit of work on probabilistic models of time-evolving networks. Broadly speaking, there are two main classes of time-evolving network models that have been considered in the literature - (i) network models where both vertex and edge sets change over time (e.g.~preferential attachment models \cite{barabasi1999emergence}) and (ii) network models where the vertex set remains the same, but the edge set changes with time (e.g.~evolving Erd\'{o}s-R\'{e}nyi graph models \cite{crane2015time}). See \cite{goldenberg2010survey} for an early survey on time-evolving network models. In this paper, we will focus on the second kind of time-evolving network models, which we call {\it dynamic network models}. These type pf probabilistic models try to represent time-evolving and multi-layer networks, where networks are represented by a sequence of snapshots of the networks at discrete time steps and the networks share the same vertex set. Thus, the methods proposed and analysis done in this paper can be applied to both time-evolving and multi-layer networks.

Most of the statistical and probabilistic models for dynamic network data sets that appear in the literature  are extensions of random graph models for static networks into dynamic setting.  Examples of such models include extension of latent space models \cite{sarkar2005dynamic}, \cite{sewell2014latent}, extension of  mixed membership block models \cite{ho2011evolving}, extension of random dot-product models \cite{tang2013attribute}, extension of stochastic block models \cite{xu2014dynamic}, \cite{xu2015stochastic}, \cite{matias2015statistical}, \cite{ghasemian2016detectability}, \cite{corneli2016exact}, \cite{zhang2016random}, \cite{pensky2016dynamic}, and extension of Erd\'{o}s-R\'{e}nyi graph models \cite{crane2015time}. Also, some Bayesian models and associated inference procedures have been proposed in the context of dynamic networks  \cite{yang2011detecting}, \cite{durante2014nonparametric}. 

Several approaches have also been put forward in the statistics literature to develop statistical frameworks for inference on dynamic and multi-layer network models. While most of the statistical inference methods developed based on different time-evolving and multi-layer network models are not developed with the goal of community detection, but many of them perform community detection as part of the statistical inference method. So, most of the works like \cite{matias2015statistical}, perform model-based community detection. Although, \cite{chen2017multilayer} have proposed model agnostic algorithms for community detection, the methods do not work when individual networks of the  are sparse. So, works like \cite{han2015consistent} introduce probabilistic models for time-evolving and multi-layer networks with community structure and use approximate likelihood (like profile likelihood) based methods for community detection. Approximate likelihood methods like variational approximation have polynomial time algorithms but lack in theoretical results, where as methods like profile likelihood have theoretical justifications but only have approximate algorithms. 
These approaches limit the scalability of the methods and make the methods very much model dependent.  

Realizing the above limitations of the existing approaches for doing statistical inference on dynamic networks and recognizing  the advantages of using spectral clustering methods (e.g.~scalability and  model agnostic nature) in case of static networks, we propose to use spectral clustering methods for addressing the community detection problem in certain dynamic networks.
 We  also provide theoretical guarantee for the performance of the proposed spectral clustering methods to identify communities in the targeted dynamic network models. The dynamic network models that we use in this paper are similar in spirit to those used in \cite{xu2014dynamic} and \cite{matias2015statistical}.


\subsection{Contribution of our work} 
\label{sec_contribution}
The main contribution of our work are -
\begin{itemize}
\item[(a)] We propose two spectral clustering methods for identifying communities in dynamic or multiple networks. The methods can be used for community detection in single static networks too. The methods are flexible enough to work for both sparse and dense networks.
\item[(b)] We also prove analytically that, under very mild parametric conditions, the proposed spectral clustering methods perform consistently to identify communities for networks generated from dynamic block models and dynamic degree-corrected block models. We show that in the above dynamic network settings, spectral clustering can recover underlying common community structure even if the individual networks are extremely   sparse (e.g.~have constant average degree). 
\end{itemize}

In this paper, we only consider the case, where the community membership does not change with time. However, the methods will still work if the community memberships change by a vanishing fraction at each time point. Some possible extensions of our work will include considering the cases when cluster memberships change significantly with time and the dynamic behavior of the network is more general.

\subsection{Structure of the paper} 
\label{sec_organization}
The remainder of the paper is organized as follows. 
In section \ref{sec_method}, we describe the spectral clustering methods in the dynamic clustering setting. In section \ref{sec_theory}, we state the theoretical results regarding the performance of the proposed spectral clustering methods. We also give the proofs of consistency. 

\section{Community Detection Algorithms}
\label{sec_method}
We consider a sequence of random unlabeled graphs $G_n^{(t)}, t=1, \ldots, T,$ on the vertex set $V_n=\{v_1, v_2, \ldots, v_n\}$ having $n$ vertices as the observed data. Note that the vertex sets $V(G_n^{(t)})$ of $G_n^{(t)}$ don't change with $t$, and $|V(G_n^{(t)})| = n$ for all $t$. The edge set $E(G^{(t)}_n)$  may be different for each $t$. 
We shall consider undirected, unweighted graphs only in this paper. However the conclusions of the paper can be extended for fixed weighted graphs in a quite straightforward way. 

As usual, we suppose that the network corresponding to the graph $G^{(t)}$ is represented by an adjacency matrix $\vA^{(t)}_{n\times n}$ whose elements are $\vA^{(t)}_{ij}\in\{0,1\}$. $\vA^{(t)}_{ij} = 1$, if node $v_i$ links to node $v_j$ at time $t$, and $\vA^{(t)}_{ij} = 0$ otherwise. 
We will refer to the network model at a specific time $t$ as the \emph{static model at time $t$}.

\subsection{Notations}
\label{sec_notation}
Let $[n] := \{1, 2, \ldots, n\}$ for any positive integer $n$, $\sM_{m,n}$ be the set of all $m\times n$ matrices which have exactly one 1 and $n-1$ 0's in each row. $\R^{m\times n}$ denotes the set of all $m\times n$ real matrices. $||\cdot||_2$ is used to denote Euclidean $\ell_2$-norm for vectors in $\R^{m\times 1}$. $||\cdot||$ is the spectral norm on $\R^{m\times n}$. $||\cdot||_F$ is the Frobenius norm on $\R^{m\times n}$, namely $||M||_F := \sqrt{trace(M^T M)}$. $\vone_{m} \in \R^{m\times 1}$ consists of all 1's, $\mathbf 1_A$ denotes the indicator function of the event $A$. 
For $\vA\in\dR^{n\times n}$, $\cC(\vA)$ and $\cN(\vA)$ denote its column space and null space of $\vA$ respectively, and $\gl_1(\vA), \gl_1^+(\vA)$ denote the smallest and smallest positive eigenvalues of $\vA$. 
If $\vA\in\R^{m\times n}$,  $I\subset [m]$ and $j\in [n]$, then $\vA_{I,j}$ (resp.~$\vA_{I,*}$)  denotes the submatrix  of $\vA$ corresponding to row index set $I$ and column index $j$ (resp.~index set $[n]$).

\subsection{Dynamic Stochastic Block Model}
The first model that we consider is a version of the time-evolving SBM. We will refer to this model as {\it dynamic stochastic block model} (DSBM) in the paper. DSBM for $K$ communities $\cC_1, \ldots, \cC_K$ can be described in terms of two parameters: (i) the membership vector $\V{z}=(z_1, \ldots, z_n)$, where each $z_i \in \{1, \ldots, K\}$, and (ii)  the $K\times K$ connectivity probability matrices $\vB:=\left(\vB^{(t)}: 1\le t\le T\right)$. The DSBM having parameters $(\V{z}, \V{\pi}, \V{B})$ is given by
\begin{eqnarray}
	\label{eq_sbm0}
	\vz_1, \ldots, \vz_n & \stackrel{iid}{\sim} & \mbox{Mult}(1;(\pi_1,\ldots,\pi_K)),\\
	\label{eq_sbm1}
	\Pro\left(A^{(t)}_{ij} = 1\right| \vz_i, \vz_j) & = & B^{(t)}_{\vz_i \vz_j}. 
\end{eqnarray}

Suppose $\vZ \in \sM_{n,K}$ denotes the actual membership matrix. $\vZ$ is unknown and we wish to estimate it. If for $i \in [n]$ the corresponding community index is $\vz_i \in [K]$, then clearly  
\[ \vZ_{ij} = \mathbf 1_{\{\vz_i=j\}},\]
In a DSBM$(\vz, \V{\pi}, \vB)$, independent edge formation is assumed given the edge probability matrices $\vP^{(t)}:=(P^{(t)}_{ij})_{i,j\in[n]}$. 
So, for $i, j \in [n]$ with $i \ne j$ and for $t \in [T]$
\beq \label{A^t bmdef}
A^{(t)}_{i,j} \sim Bernoulli(P^{(t)}_{i,j}), \text{ where }
\vP^{(t)} := \vZ \vB^{(t)}\vZ^T.\eeq

\subsection{Dynamic Degree Corrected Block Model}
The other model that we consider in this paper is an extension of the degree corrected  block model (DCBM) to the dynamic setting. The dynamic degree-corrected block model (DDCBM) for $K$ communities $\cC_1, \ldots, \cC_K$ can be described in terms of three parameters: (i) the membership vector $\V{z}=(z_1, \ldots, z_n)$, where each $z_i \in \{1, \ldots, K\}$,  (ii)  the $K\times K$ connectivity probability matrices $\vB:=\left(\vB^{(t)}: 1\le t\le T\right)$, and (iii)
a given set of \emph{degree parameters} $\V{\psi} = (\psi_1, \ldots, \psi_n)$. The DDCBM having parameters $(\V{z}, \V{\pi}, \V{B}, \V{\psi})$ is given by
\begin{eqnarray}
	\label{eq_dcbm0}
	\vz_1, \ldots, \vz_n & \stackrel{iid}{\sim} & \mbox{Mult}(1;(\pi_1,\ldots,\pi_K)),\\
	\label{eq_dcbm1}
	\Pro\left(A^{(t)}_{ij} = 1\right| \vz_i, \vz_j) & = & \psi_i\psi_jB^{(t)}_{\vz_i \vz_j}. 
\end{eqnarray}

The inclusion of $\V{\psi}$ involves the obvious issue of identifiability. In order to avoid this issue  we assume that 
\begin{align}
	\label{eq_dcbm_id}  
	\max_{i\in\cC_k} \psi_i=1 \text{ for all } k\in\{1, 2, \ldots, K\}.   
\end{align}

The identifiability of the models described in \eqref{eq_sbm1} and \eqref{eq_dcbm1} have been proven by Matias and Miele (2016) \cite{matias2015statistical}, so we will not elaborate on that. 
For the dynamic network models described in \eqref{eq_sbm1} and  \eqref{eq_dcbm1}, we shall try to estimate the underlying latent variables $\V{z}$  using spectral clustering methods.

In an DDCBM$(\vz, \V{\psi}, \V{\pi}, \vB)$ also independent edge formation is assumed given the edge probability matrices $\tilde\vP^{(t)}$. Here also, for $i, j \in [n]$ with $i \ne j$ and for $t \in [T]$
\beq \label{A^t def}
A^{(t)}_{i,j} \sim Bernoulli(\tilde P^{(t)}_{i,j}), \text{ where }
\tilde\vP^{(t)}  := Diag(\V{\psi})\vZ \vB^{(t)}\vZ^TDiag(\V{\psi}).
\eeq

\subsection{Spectral Clustering  for Sum of Adjacency Matrices}
\label{sec_algo_1}
We apply the spectral clustering method using \textbf{sum of the adjacency matrices} 
\[ \vA_0=\sum_{t=1}^T \vA^{(t)}.\]
Define $\bar{d} = \frac{1}{nT}\mathbf{1}_n^T\vA_0\mathbf{1}_n$ to be the average degree of a node. 
For $\vA_0$, let $n'$ be the number of rows and $1\le k_1<k_2<\cdots<k_{n'}\le n$ be the row indices having row sum at most $e(T\bar{d})^{5/4}$. 
Let $\vA\in\R^{n'\times n'}$ be the submatrix of $\vA_0$ such that $A_{i,j} := (A_0)_{k_i,k_j}$ for $i, j\in [n']$. Next, we obtain the leading $K$ eigenvectors of $\vA$ corresponding to its largest absolute eigenvalues. Suppose $\hat \vU\in \R^{n'\times K}$ contains those eigenvectors as columns. Then, we use an $(1+\gre)$ approximate $K$-means clustering algorithm on the row vectors of $\hat\vU$ to obtain $\hat\vZ \in \sM_{n',K}$ and $\hat \vX \in \R^{K \times K}$ such that
\begin{align} \label{eq:kmeans}
	||\hat\vZ\hat\vX - \hat \vU||_F^2 \le (1+\gre) \min_{\V{\gC} \in \sM_{n'\times K}, \vX \in \R^{K\times K}}  ||\V{\gC} \vX - \hat\vU||_F^2.
\end{align}
Finally, $\hat\vZ$ is extended to $\hat\vZ_0\in\sM_{n,K}$ by taking $(\hat\vZ_0)_{k_j,*}:=\hat\vZ_{j,*}, j\in[n'],$  and filling in the remaining rows arbitrarily. 
\[ (\hat\vZ_0)_{i,*} := \begin{cases} \hat\vZ_{j,*} & \text{ if  $i=k_j$ for some $j\in[n']$} \\ (e^K_1)^T & \text{otherwise}\end{cases}\]
$\hat\vZ_0$ is the estimate of $\vZ$ from this method. The reason for using an $(1+\gre)$ approximate $K$-means clustering algorithm is that the $K$-means clustering is originally an NP-hard problem with only $(1+\gre)$-approximate solutions available. 

\vspace{0.2in}
\framebox[\textwidth]
{\centering\parbox{.95\textwidth}{
		\textbf{Algorithm 1:} Spectral Clustering of the Sum of  the Adjacency Matrices \\
		\textbf{Input:} Adjacency matrices $\vA^{(1)}, \vA^{(2)}, \ldots, \vA^{(T)}$; number of communities $K$; approximation parameter $\gre$.\\
		\textbf{Output:} Membership matrix $\hat\vZ_0$. \\
		\begin{enumerate}
			\item Obtain the \textbf{sum of the adjacency matrices},
			$\vA_0=\sum_{t=1}^T \vA^{(t)}$.
			\item Get $\bar d :=\frac{1}{nT}\vone_n^T \vA_0\vone_n$. Let $n'$ be the number of rows (having indices $1\le k_1<k_2<\cdots <k_{n'}\le n$) of $\vA_0$ with row sum at most $e(T\bar d)^{5/4}$.
			\item Let $\vA\in\R^{n'\times n'}$ be the submatrix of $\vA_0$: $A_{i,j}=(A_0)_{k_i,k_j}$ for $i, j\in[n']$. 
			\item Obtain $\hat\vU\in\R^{n'\times K}$ consisting of the leading $K$ eigenvectors of $\vA$ corresponding to its largest absolute eigenvalues. 
			\item Use $(1+\gre)$ approximate $K$-means clustering algorithm on the row vectors of $\hat\vU$ to obtain $\hat{\vZ} \in \sM_{n',K}$ and $\hat \vX \in \R^{K \times K}$ satisfying \eqref{eq:kmeans}.
			\item Extend $\hat\vZ$ to obtain $\hat\vZ_0\in\sM_{n,K}$ as follows. $(\hat\vZ_0)_{i,*} = \hat\vZ_{j,*}$ (resp.~$(1,0,\ldots, 0)$) for $i=k_j$ (resp.~$i\notin\{k_1, \ldots, k_{n'}\}$). 
			\item $\hat\vZ_0$ is the estimate of $\vZ$.
		\end{enumerate}
}}\\

\subsection{Spherical Spectral Clustering Algorithm for Sum of Adjacency Matrices} 
\label{sec_algo_2}
We apply the Spherical Spectral Clustering method using \textbf{sum of the adjacency matrices} 
\[ 
\vA_0=\sum_{t=1}^T \vA^{(t)}.\]
The spherical spectral clustering method is modification of the method described in Section \ref{sec_algo_1} based on the spherical spectral clustering algorithm, which was proposed in Jin (2015) \cite{jin2015fast} and used in Lei and Rinaldo (2015) \cite{lei2014consistency}. We will use the norm $\norm{\cdot}_{2,1}$ on $\R^{m\times n}$ defined by $\norm{\vM}_{2,1}:=\sum_{i=1}^m \norm{\vM_{i,*}}_2$. 

In Algorithm 2, we describe the spherical spectral clustering method using the truncated sum of the adjacency matrices $\vA$. 
We obtain $\hat\vU$ as earlier. Recall that $\hat\vU \in\R^{n'\times K}$ contains the leading $K$ eigenvectors (corresponding to the largest absolute eigenvalues) of $\vA$ as columns.  Let $n''$ be the number of nonzero rows (having indices $1\le l_1 < l_2 <\cdots <l_{n''}\le n'$) of $\hat\vU$. Let $\hat\vU^+ \in \R^{n''\times K}$ consist of the normalized nonzero rows of $\hat\vU$, i.e. $\hat\vU^+_{i,*}=\hat\vU_{l_i,*}/\norm{\hat\vU_{l_i,*}}_2$ for $i\in [n'']$.  Then we use $(1+\gee)$ approximate $K$-median clustering algorithm on the rows of $\hat\vU^+$ to obtain $\check\vZ^+ \in \sM_{n'',K}$ and $\check \vX \in \R^{K \times K}$ such that
\beq \label{k median}
\norm{\check\vZ^+\check \vX - \hat\vU^+}_{2,1} \le (1+\gee) \min_{\V{\gC} \in \sM_{n''\times K}, \vX \in \R^{K\times K}}  \norm{\V{\gC} \vX - \hat\vU^+}_{2,1}.\eeq
Finally, $\check\vZ^+$ is extended to $\check\vZ\in\sM_{n',K}$, and then 
$\check\vZ$ is extended to $\check\vZ_0\in\sM_{n,K}$ by taking $\check\vZ_{l_j,*}:=\check\vZ^+_{j,*}, j\in [n''],$ and $(\check\vZ_0)_{k_j,*}:=\check\vZ_{j,*}, j\in [n'],$  and filling in the remaining rows arbitrarily.  
\[ \check\vZ_{i,*} := \begin{cases} \check\vZ^+_{j,*} & \text{ if  $i=l_j$} \\ (e^K_1)^T & \text{if $i\notin\{l_1, \ldots, l_{n''}\}$}\end{cases},
(\check\vZ_0)_{i,*} := \begin{cases} \check\vZ_{j,*} & \text{ if  $i=k_j$} \\ (e^K_1)^T & \text{if $i\notin\{k_1, \ldots, k_{n'}\}$}\end{cases}
\]
$\check\vZ_0$ is the estimate of $\vZ$ from this method. As in the previous case, the reason for using an $(1+\gre)$ approximate $K$-medians clustering algorithm is that the $K$-medians clustering is originally an NP-hard problem with only $(1+\gre)$-approximate solutions available. 

\vspace{0.2in}

\framebox[\textwidth]
{\centering\parbox{.95\textwidth}{
		\textbf{Algorithm 2:} Spherical Spectral Clustering of the Sum of the Adjacency Matrices \\
		\textbf{Input:} Adjacency matrices $\vA^{(1)}, \vA^{(2)}, \ldots, \vA^{(T)}$; number of communities $K$; approximation parameter $\gre$. \\
		\textbf{Output:} Membership matrix $\check\vZ$. \\
		\begin{enumerate}
			\item Obtain the \textbf{sum of the adjacency matrices} $\vA_0=\sum_{t=1}^T \vA^{(t)}$.
			\item Get $\bar d :=\frac{1}{nT}\vone_n^T \vA_0\vone_n$. Let $n'$ be the number of rows (having indices $1\le k_1<k_2<\cdots <k_{n'}\le n$) of $\vA_0$ having row sum at most $e(T\bar d)^{5/4}$.
			\item Let $\vA\in\R^{n'\times n'}$ be the submatrix of $\vA_0$: $A_{i,j}:=(A_0)_{k_i,k_j}$ for $i,j\in[n']$. 
			\item Obtain $\hat\vU\in\R^{n'\times K}$ consisting of the leading $K$ eigenvectors of $\vA$ corresponding to its largest absolute eigenvalues. 
			\item Let $n''$ be the number of nonzero rows (having indices $1\le l_1<l_2<\cdots <l_{n''}\le n'$) of $\hat\vU$. Obtain $\hat\vU^+ \in \R^{n''\times K}$ consisting of normalized nonzero rows of $\hat\vU$, i.e.~$\hat\vU^+_{i,*}=\hat\vU_{l_i,*}/\norm{\hat\vU_{l_i,*}}_2$ for $i\in[n'']$. 
			\item Use $(1+\gre)$ approximate $K$-median clustering algorithm on the row vectors of $\hat\vU^+$ to obtain $\check\vZ^+ \in \sM_{n'',K}$ and $\check \vX \in \R^{K \times K}$ obeying \eqref{k median}.
			\item Extend $\check\vZ^+$ to obtain $\check\vZ\in\sM_{n',K}$ as follows. $\check\vZ_{j,*} = \check\vZ^+_{i,*}$ (resp.~$(1,0,\ldots, 0)$) for $j=l_i$ (resp.~$j\notin\{l_1, \ldots, l_{n''}\}$). 
			\item Extend $\check\vZ$ to obtain $\check\vZ_0\in\sM_{n,K}$ as follows. $(\check\vZ_0)_{j,*} = \check\vZ_{i,*}$ (resp.~$(1, 0,\ldots, 0)$) for $j=k_i$ (resp.~$j\notin\{k_1, \ldots, k_{n'}\}$). 
			\item $\check\vZ_0$ is the estimate of $\vZ$.
		\end{enumerate}
}}\\

\subsubsection{Selection of $K$}
\label{sec_k_sel}
In this paper, we do not consider the problem of selection of number of communities $K$. We assume that the number of communities, $K$, is given for the Algorithms 1-4. However, $K$ can also be estimated using the scree plot of the absolute eigenvalues of the matrices $\V{A}$ and $\ol{\V{A}^{[2]}}$. We can use the thresholds proposed in \cite{chatterjee2015matrix} for choosing the  number of communities.

\subsubsection{Parameter Estimation}
\label{sec_par_est}
The basic goal of community detection is to infer the node labels, or equivalently the membership matrix $\V{Z}$, from the data. Although we do not explicitly consider the estimation of the parameters $\pi$ and $\vB$, they can be estimated using an estimate  $\hat{\V{Z}}$ of $\V{Z}$ as follows.
\begin{align}
\label{eq_propnode}
\hat{\pi}_{a}  & := & \frac{1}{n}\sum_{i=1}^n \mathbf{1}\left(\hat{\V{Z}}_i = e_a\right),\ \ \ \ \ \ \ \ \ \ \ \ \ \ \ \ \ \ \ \ \ \ \ 1\leq a\leq K, \\
\label{eq_meanlink}
\hat{B}^{(t)}_{ab} & := & \frac{1}{O_{ab}}\sum_{i=1}^n\sum_{j=1}^n A^{(t)}_{ij}\mathbf{1}\left(\hat{\V{Z}}_i = e_a, \hat{\V{Z}}_j = e_ b\right),\ \ \ 1\leq a,b\leq K,
\end{align}
where
\begin{eqnarray*}
O_{ab}  := & \left\{
	\begin{array}{ll}
	\hat n_a\hat n_b, & 1\leq a\neq b\leq K \\
	\hat n_a(\hat n_a-1), & 1\leq a\leq K, a=b
\end{array}
\right. , &  \hat n_{a}  :=  \sum_{i=1}^n \mathbf{1}\left(\hat{\V{Z}}_i = e_a\right),\ a\in[K] 
\end{eqnarray*}
and $(e_a)_{K\times 1}$ denotes the unit vector with $1$ at $a^{\mbox{th}}$ position ($a\in [K]$).

\section{Theoretical Justification}
\label{sec_theory}

\subsection{Consistency of  Spectral Clustering label $\hat\vZ_0$ under DSBM}
\label{sec_proof_hatZ_supp}
In order to state the result about the consistency of $\hat\vZ_0$ for networks generated from \textbf{DSBM}, we need to assume the following condition on the sum of connection probability matrices - 
\begin{align} \label{eq_sum_ass}
	\sum_{t\in[T]} \vB^{(t)} \text{ must be nonsingular.}
\end{align}

\begin{thm} \label{ConsSum1}
	Let $\{\vA^{(t)}\}_{t=1}^T$ be the adjacency matrices of the networks generated from the DSBM, where
	\begin{itemize}
		\item $T\ge 1$ is the number of networks
		\item $n$ is the number of nodes, $K$ is the number of communities and $n\ge 2K$
		\item $\{\vB^{(t)}\}_{t=1}^T$ satisfy assumption  \eqref{eq_sum_ass}
		\item  $\ga=\ga(n,T):=\max_{a,b \in[K], t\in [T]} B^{(t)}_{a,b}$ is the maximum connection probability of an edge at any time, and $\gl=\gl(n,T)>0$ is such that $\gl\ga$ is the smallest eigenvalue of  $(\vB^{(t)}, t\in[T])$.
		\item $n_{\text{min}}$ is the size of the smallest community.
	\end{itemize}
	For any $\gee>0$ and $c\in(0,1)$, 
	there are constants $C_1=C_1(\gee,c), C_2=C_2(c)>0$ such that if $(f_a, a\in[K])$ denotes the proportion of   nodes having community label $a$, which are misclassified in Algorithm 1 and if $Tn\ga\ge C_2(K/\gl)^5$, then
	\[ \sum_{a\in[K]} f_a \le \left(\frac{n_{\text{min}}}{n}\right)^{-1} e^{-(1-c)Tn\ga} +  C\left(\frac{n_{\text{min}}}{n}-e^{-(1-c)Tn\ga}\right)^{-2}K\gl^{-2}(Tn\ga)^{-1/2} \]
	with probability at least $1-5\exp(-c\min\{Tn\ga\gl,\log(n)\})$.
	
	Therefore, in the special case,  when 
	(i)  $\gl>0$ and \; (ii) the community sizes are balanced, i.e.~$n_{\text{max}}/n_{\text{min}} = O(1)$, then consistency holds for $\hat\vZ_0$ with probability $1-o(1)$ if $Tn\ga\gl\to\infty$.
\end{thm}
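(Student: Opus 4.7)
The plan is to run the Davis--Kahan plus $K$-means template, adapted to the sparse regime where the degree trimming built into Algorithm~1 is essential. I will proceed in four steps: (1) population-level analysis of $\E\vA_0$; (2) spectral concentration of the trimmed sum; (3) eigenvector perturbation via Davis--Kahan; and (4) a $K$-means misclustering bound, corrected afterwards for the trimmed vertices.

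\textbf{Population step.} Off the diagonal, $\E\vA_0 = \vZ\bigl(\sum_{t=1}^T \vB^{(t)}\bigr)\vZ^T =: \vP_0$. Because $\sum_t \vB^{(t)}$ is nonsingular with smallest absolute eigenvalue at least $\lambda\alpha$ and each community has at least $n_{\min}$ members, $\vP_0$ has rank $K$ with smallest nonzero absolute eigenvalue $\gtrsim n_{\min}\lambda\alpha$. Writing the compact spectral decomposition $\vP_0 = \vU\Lambda\vU^T$, the rows of $\vU$ take exactly $K$ distinct values, one per community, with pairwise separation $\gtrsim 1/\sqrt{n_{\max}}$; the same structure persists for the submatrix $\vP_0^{\mathrm{sub}}$ indexed by the untrimmed rows, provided only $o(n_{\min})$ of them are removed from any community.

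\textbf{Concentration step (the main obstacle).} Since individual networks may have constant average degree, neither matrix Bernstein nor a direct union bound controls $\|\vA_0 - \vP_0\|$: the maximum row sum alone fluctuates enough to dominate the signal. I would adapt the Feige--Ofek / Le--Levina--Vershynin approach for sparse stochastic block models: (a) a Chernoff tail bound on each binomial row sum of $\vA_0$, whose mean is $\Theta(Tn\alpha)$, shows that at most $ne^{-(1-c)Tn\alpha}$ rows exceed the trimming threshold $e(T\bar d)^{5/4}$ with probability $\ge 1 - \exp(-c\log n)$; (b) on the resulting $n'\times n'$ submatrix one obtains $\|\vA - \vP_0^{\mathrm{sub}}\| \lesssim \sqrt{Tn\alpha}$ with probability $\ge 1 - \exp(-c\, Tn\alpha\lambda)$. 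The particular exponent $5/4$ in the trimming threshold, together with the hypothesis $Tn\alpha \ge C_2(K/\lambda)^5$, is calibrated so that the signal $n_{\min}\lambda\alpha$ dominates the noise $\sqrt{Tn\alpha}$ with the required spectral gap even after removing the trimmed rows.

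\textbf{Perturbation and $K$-means.} A Davis--Kahan $\sin\Theta$ inequality applied to the eigenspaces of $\vA$ and $\vP_0^{\mathrm{sub}}$ gives $\min_O \|\hat\vU - \vU^{\mathrm{sub}}O\|_F \lesssim \sqrt{K}\,\|\vA - \vP_0^{\mathrm{sub}}\|/(n_{\min}\lambda\alpha)$. The Lei--Rinaldo analysis of $(1+\varepsilon)$-approximate $K$-means, combined with the pairwise centroid separation $\gtrsim 1/\sqrt{n_{\max}}$, yields a misclustered count per community of order $n_{\max}\|\hat\vU - \vU^{\mathrm{sub}}O\|_F^2$. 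Normalizing by the effective community size $n_{\min} - ne^{-(1-c)Tn\alpha}$ and substituting the concentration bound produces the second summand in the theorem; the first summand $(n_{\min}/n)^{-1}e^{-(1-c)Tn\alpha}$ accounts for the trimmed vertices, which are conservatively counted as misclassified since Algorithm~1 assigns them an arbitrary label. A union bound over the two high-probability events of Step~2 yields the stated tail. The special case is then immediate: when $n_{\min}/n$ and $\lambda$ are bounded below, the first term decays exponentially in $Tn\alpha$ and the second is $O\!\left(K\lambda^{-2}(Tn\alpha)^{-1/2}\right)$, so both vanish whenever $Tn\alpha\lambda\to\infty$.
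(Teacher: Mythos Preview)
Your overall architecture matches the paper's proof: the paper also reduces to a Davis--Kahan bound (its Lemma on misclassification gives $\sum_a f_a \le (n-n')/n_{\min} + CK\gamma_n^{-2}\|\vA-\vP\|^2$), establishes a Feige--Ofek style spectral norm bound for the trimmed sum (its Theorem~\ref{normbd}), and lower-bounds the smallest nonzero eigenvalue $\gamma_n$ by $T\alpha\lambda\, n'_{\min}$. The handling of trimmed vertices via a binomial tail plus Markov, and the final union bound, are also the same.

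The one substantive discrepancy is in your concentration step. You assert $\|\vA-\vP_0^{\mathrm{sub}}\|\lesssim \sqrt{Tn\alpha}$, but with the specific trimming threshold $e(T\bar d)^{5/4}$ used in Algorithm~1 the paper only obtains $\|\vA-\vP\|\le C(Tn\alpha)^{3/4}$, and this is precisely what produces the $(Tn\alpha)^{-1/2}$ factor in the theorem (your claimed rate would yield $(Tn\alpha)^{-1}$ instead). The reason is that the $\sqrt{d}$-type Feige--Ofek bound needs trimming at a \emph{constant} multiple of the expected degree; the less aggressive threshold $(T\bar d)^{5/4}$ leaves surviving row sums as large as $(Tn\alpha)^{5/4}$, and in the heavy-pairs analysis this contributes an extra $(Tn\alpha)^{1/4}$. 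Relatedly, your explanation of the role of the exponent $5/4$ is inverted: it does not sharpen the spectral gap but rather accommodates the fact that one can only guarantee $T\bar d \gtrsim (\lambda/K)\,Tn\alpha$ in general. The threshold $(T\bar d)^{5/4}$ is chosen so that it still exceeds $Tn\alpha$ (needed for the binomial tail on trimmed vertices) once $Tn\alpha \ge C_2(K/\lambda)^5$; this is exactly where that hypothesis enters. Correcting the spectral norm exponent to $3/4$ and tracing it through your Steps 3--4 recovers the stated bound.
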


\begin{rem}
	The condition ``$Tn\ga\gl\to\infty$" is necessary and sufficient in order to have a consistent estimator of $\vZ$.
	Theorem \ref{ConsSum1} proves the sufficiency. The  necessity of the condition follows from the work of Zhang and Zhou \cite{ZZ16}. Consider a SBM (so $T=1$), where (i) there are two communities having equal size $n$ and  (ii) the within (resp.~between)  community connection probability is $a/n$ (resp.~$b/n$) for some constants $a>b>0$. In this case $Tn\ga\gl=a-b$ which is bounded. In this case, (see\cite{ZZ16}) that there is a constant $c>0$  such that if 
	\[ \frac{(a-b)^2}{a+b} <c\log\frac 1\gc\]
	for some $\gc$, 
	then one cannot recover a partition (in expectation) having proportion of misclassification $<\gc$, regardless the algorithm.
\end{rem}
\begin{rem}
	If we use $(T\bar d)^{1+\gd}$ instead of  $(T\bar d)^{5/4}$ in Algorithm 1, then the bound for $\sum_{a\in[K]}f_a$ will involve $(Tn\ga)^{-(1-2\gd)}$ instead of $(Tn\ga)^{-1/2}$ and we will need $Tn\ga \ge C_2(K/\gl)^{1/\gd+1}$ instead of $(K/\gl)^5$.
\end{rem}

\begin{rem} \label{CompareWithOneNetwork}
	Despite using $\vA$ for spectral clustering, if one uses just one graph for spectral
	clustering and discards all the remaining observations of $\{\vA^{(t)}\}$, then  (assming all $B^{(t)}$ are equal and associative)
	the former algorithm outperforms the later one (with respect to the fraction of nodes mis-clustered) 
	by a factor 
	\[ \begin{cases}
	1/T & \text{ if } d\ge \log(n) \\
	1/T^3 & \text{ if } Td\le \log(n) \\
	(d/\log(n))^2/T & \text{ if } d\le \log(n) \le Td\end{cases}.\]
\end{rem}

\subsubsection{Proof of Theorem \ref{ConsSum1}}
Without loss of generality we can assume that $k_i=i$ for all $i\in[n']$. Define $n_i':=\vone_{n'}^T\vZ_{[n'],i}$ for $i\in[K]$ and
\begin{align}
	\GD := Diag\left(\sqrt{n_1'}, \sqrt{n_2'}, \ldots, \sqrt{n_K'}\right),  \quad \vB  := \sum_{t=1}^T \GD \vB^{(t)}\GD,\notag \\
	\vP  := \sum_{t=1}^T \vZ_{[n'],*} \vB^{(t)}\vZ_{[n'],*}^T = \vZ_{[n'],*}\GD^{-1}\vB\GD^{-1}\vZ_{[n'],*}^T. 
	\label{Pdef}
\end{align}
It is easy to see that if $\vR\vD\vR^T$  is the spectral decomposition of $\vB$, then $\vU:=\vZ_{[n'],*}\GD^{-1}\vR$ consists of the leading $K$ eigenvectors of $\vP$.

The proof of Theorem \ref{ConsSum1}  relies on the estimates provided in Lemma \ref{misclassify bd} and Theorem \ref{normbd}.

\begin{lem} \label{misclassify bd}
	For the estimator $\hat\vZ_0$ of $\vZ$ (as described in Algorithm 1),  $\sum_{a\in[K]} f_a \le (n-n')/n_{min}+32K(4+2\eps)\gc_n^{-2}||\vA-\vP||^2$, where $\vA$ is  described in Algorithm 1, $\vP$ is defined in \eqref{Pdef} and $\gc_n$ denotes the smallest nonzero singular value of $\vP$.
\end{lem}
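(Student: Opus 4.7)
The plan is to split the total misclassification $\sum_{a\in[K]} f_a$ into two contributions: (i) the rows removed by the degree-based truncation (those not in $[n']$) and (ii) the error introduced when the $(1+\eps)$-approximate $K$-means step applied to $\hat\vU$ mislabels a row within $[n']$. Since the $n-n'$ removed rows are distributed among the $K$ communities, each of size at least $n_{\min}$, and each such row contributes at most $1/n_{z_i}\le 1/n_{\min}$ to $\sum_a f_a$, the first contribution is at most $(n-n')/n_{\min}$. The rest of the argument estimates the second contribution.

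First I would verify the spectral structure of $\vP$ asserted in the excerpt. Writing $\vB=\vR\vD\vR^T$ with $\vR$ orthogonal and $\vD$ diagonal, the factorisation $\vP=(\vZ_{[n'],*}\GD^{-1})\vB(\vZ_{[n'],*}\GD^{-1})^T$ combined with the observation that the columns of $\vZ_{[n'],*}\GD^{-1}$ are orthonormal (since $\GD^{-1}\vZ_{[n'],*}^T\vZ_{[n'],*}\GD^{-1}=I_K$) shows that $\vU=\vZ_{[n'],*}\GD^{-1}\vR$ carries an orthonormal set of leading eigenvectors of $\vP$, the nonzero eigenvalues being the diagonal of $\vD$. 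The decisive row structure is $\vU_{i,*}=(n'_{z_i})^{-1/2}\vR_{z_i,*}$, from which the orthonormality of the rows of $\vR$ yields, for $k:=z_i\ne z_j=:l$,
\[ \|\vU_{i,*}-\vU_{j,*}\|_2^2 \;=\; \frac{1}{n'_k}+\frac{1}{n'_l} \;\ge\; \frac{1}{n'_k}. \]

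Next, a Davis-Kahan-type sin-$\Theta$ bound (in the Yu-Wang-Samworth form) produces an orthogonal matrix $\vO\in\R^{K\times K}$ satisfying $\|\hat\vU-\vU\vO\|_F^2\le c\,\|\vA-\vP\|^2/\gc_n^2$ for an explicit constant $c$, using that $\vP$ has rank at most $K$ and that $\gc_n$ is its smallest nonzero singular value. I would then run the $(1+\eps)$-approximate $K$-means argument of Lei and Rinaldo: because the rows of $\vU$ take only $K$ distinct values with inter-community separations $\sqrt{1/n'_k+1/n'_l}$, a standard comparison of the approximate $K$-means cost with the true-centroid cost $\|\hat\vU-\vU\vO\|_F^2$, combined with a triangle-inequality lower bound by one quarter of the inter-centroid gap squared, shows that the number of nodes in $[n']$ with true label $a$ that are mislabeled is at most $4(2+\eps)\,n'_a\,\|\hat\vU-\vU\vO\|_F^2$. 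Dividing by $n_a\ge n'_a$, summing over $a\in[K]$, and plugging in the Davis-Kahan estimate produces the desired $32K(4+2\eps)\gc_n^{-2}\|\vA-\vP\|^2$ term.

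The main obstacle is carefully tracking constants through both the Davis-Kahan and the $K$-means steps and, more substantively, resolving the label-permutation ambiguity in the $(1+\eps)$-approximate output: one must pair each empirical centroid with a distinct true centroid via a Hungarian-type optimal matching before invoking the triangle-inequality lower bound, otherwise two estimated centroids could be attributed to the same true community and the counting would double-count. A secondary technicality is the vacuous regime $\|\vA-\vP\|\gtrsim \gc_n$, in which case the claimed bound already exceeds $1$ and nothing needs to be proved; this lets us assume throughout that the Davis-Kahan perturbation step is non-trivially applicable. One must also verify $n'_a\ge 1$ so that $\vU$ is well-defined, but this holds on the high-probability event exploited in the ambient proof of Theorem \ref{ConsSum1}.
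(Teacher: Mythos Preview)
Your proposal is essentially the paper's own proof: split off the $n-n'$ truncated rows, apply Davis--Kahan (the paper cites \cite[Lemma 5.1]{lei2014consistency}, which is the Yu--Wang--Samworth form you invoke) to get $\|\hat\vU-\vU\vQ\|_F\le 2\sqrt{2K}\,\gc_n^{-1}\|\vA-\vP\|$, and then run the Lei--Rinaldo $K$-means argument on the $K$-valued row structure $\vU_{i,*}=(n'_{z_i})^{-1/2}\vR_{z_i,*}$.

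Two small points. First, your bookkeeping ``bound each $|S_a|$ by $C\,n'_a\|\hat\vU-\vU\vQ\|_F^2$, divide by $n_a$, then sum over $a$'' picks up an extra factor of $K$ and yields $O(K^2)\gc_n^{-2}\|\vA-\vP\|^2$ rather than the stated $32K(4+2\eps)$. The paper avoids this by keeping the sum intact: from $\tfrac14\sum_a |S_a|/n'_a \le \|\hat\vZ\hat\vX-\vU\vQ\|_F^2 \le 2(2+\eps)\|\hat\vU-\vU\vQ\|_F^2$ one gets $\sum_a |S_a|/n_a \le \sum_a |S_a|/n'_a \le 8(2+\eps)\cdot 8K\gc_n^{-2}\|\vA-\vP\|^2$ directly. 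Second, the Hungarian-matching step you flag as the ``main obstacle'' is not needed: the paper simply defines $T_a=\{j:z_j=a,\ \|\hat\vZ_{j,*}\hat\vX-\vV_{a,*}\|_2\le \|\vV_{a,*}\|_2/2\}$ and shows by triangle inequality that nodes in distinct $T_a$'s receive distinct $\hat\vZ$-labels, which (together with the pigeonhole when all $T_a$ are nonempty, and the trivially-satisfied bound otherwise) identifies the permutation without any explicit matching.
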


\begin{thm} \label{normbd}
	Let $\vA$ be the matrix described in Algorithm 1 and $\vP$ be as in \eqref{Pdef}. For any constant $c\in(0,1)$, there are constants $C, C'>0$ (depending on $c$) such that if $Tn\ga \ge C'(K/\gl)^5$ and
	$\sA:=\{||\vA - \vP|| \le C(Tn\ga)^{3/4}\}\cap\{(T\bar d)^{5/4} \ge  Tn\ga\}$, then $\pr(\sA)\ge 1-4\exp[-c\min\{\log(n), Tn\ga\gl\}]$.
\end{thm}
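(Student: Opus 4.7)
The event $\sA$ splits into two requirements: the inequality $(T\bar d)^{5/4}\ge Tn\ga$ and the spectral bound $\|\vA-\vP\|\le C(Tn\ga)^{3/4}$, which I plan to control separately. For the first, I would observe that $n^2 T\bar d=\mathbf 1^T\vA_0\mathbf 1$ is a sum of independent Bernoulli variables whose mean lies between $cn^2 T\ga$ (using assumption \eqref{eq_sum_ass} together with the definition of $\gl$) and $n^2 T\ga$. A one-sided Chernoff bound then yields $T\bar d\ge cTn\ga$ with probability at least $1-\exp(-c'Tn^2\ga)$; combined with the hypothesis $Tn\ga\ge C'(K/\gl)^5$, this implies $(T\bar d)^{5/4}\ge Tn\ga$ on an event of the desired probability.

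The spectral bound is the heart of the proof. Writing $\vM:=\vA_0-\vP_0$ and $\tau:=e(T\bar d)^{5/4}$, I would decompose $\vM=\vM^{(1)}+\vM^{(2)}$, where $\vM^{(2)}$ is supported on entries whose row or column of $\vA_0$ has degree exceeding $\tau$. Since $[n']$ is exactly the set of rows with degree at most $\tau$, the principal submatrix $(\vM^{(2)})_{[n'],[n']}$ vanishes, so $\|\vA-\vP\|=\|(\vM)_{[n'],[n']}\|\le \|\vM^{(1)}\|$. The matrix $\vM^{(1)}$ has per-entry magnitude at most $T$, per-entry variance at most $T\ga$, and, crucially, a row-sum cap $\tau=O((Tn\ga)^{5/4})$. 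Applying a truncated matrix Bernstein inequality, or equivalently a Feige--Ofek / Le--Levina--Vershynin style regularized discrepancy argument (an $\epsilon$-net on the sphere split into a light-pair contribution handled by Bernstein and a heavy-pair contribution handled by the row-sum cap), produces $\|\vM^{(1)}\|\le C(Tn\ga)^{3/4}$ with probability $1-\exp(-c\min\{\log n,\, Tn\ga\gl\})$. A union bound with the degree event gives the stated lower bound on $\pr(\sA)$.

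The chief obstacle is the sparse regime $Tn\ga=O(\log n)$, in which $\|\vM\|$ without any trimming can be substantially larger than $\sqrt{Tn\ga}$ because a handful of high-degree vertices dominate the spectrum; the row-sum cap is what defuses this and is the only reason the argument works at constant average degree. A secondary subtlety is that the trimming set $[n']$ is itself random and correlated with $\vM$, which the decomposition sidesteps by absorbing the entire dependence into $\vM^{(1)}$ and then handling it inside the concentration argument. Finally, the specific exponent $5/4$ in the threshold, rather than the sharp choice $\Theta(Tn\ga)$ used in the Feige--Ofek regime, trades a factor of roughly $(Tn\ga)^{1/4}$ in the final spectral bound for a significantly simpler truncation analysis; this is precisely why the rate is $(Tn\ga)^{3/4}$ rather than the optimal $\sqrt{Tn\ga}$, a trade-off that the remark following the theorem makes explicit by reparameterizing $5/4$ as $1+\gd$.
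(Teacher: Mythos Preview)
Your plan is essentially the paper's proof: the degree event $(T\bar d)^{5/4}\ge Tn\ga$ is Lemma \ref{Algo1 A bd} (a two-sided Chernoff/binomial large deviation, with the lower tail picking up the factor $\gl$ via $B^{(t)}_{a,a}\ge\gl\ga$), and the spectral bound is the Feige--Ofek/Lei--Rinaldo $\gee$-net argument with the light/heavy pair split exactly as you describe. Two corrections are worth making. First, a matrix Bernstein inequality is \emph{not} an equivalent route: it yields $\sqrt{Tn\ga\log n}$ rather than $(Tn\ga)^{3/4}$, which fails in the sparse regime $Tn\ga=O(1)$ that motivates the whole construction; the paper singles this out in the remark immediately following the theorem. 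Second, the heavy-pair contribution is not handled by the row-sum cap alone. In the paper's dyadic decomposition the cap $e(I,J)\le|I|(T\bar d)^{5/4}$ disposes of only one of six sub-cases ($\sS_3$); the remaining cases $\sS_4$--$\sS_6$ require the combinatorial discrepancy bound of Lemma \ref{max degreebd}, which controls $e(I,J)$ uniformly over all pairs of subsets and is the technically heaviest ingredient. Your phrase ``Le--Levina--Vershynin style regularized discrepancy argument'' does point at this, but the parenthetical summary that attributes the heavy pairs solely to the row-sum cap undersells what is actually needed.
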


The proof of Lemma \ref{misclassify bd} uses some of the known techniques,  but involves some additional technical details. We present the proof  in Section \ref{misclassify bd proof}.

\begin{rem}
	Theorem \ref{normbd} cannot be proved only using the  conventional matrix concentration inequalities, \eg the matrix Bernstein inequality, which would provides suboptimal bound for the spectral norm (with an extra $\log(n)$ factor). 
\end{rem}

\begin{rem}
	There are some methods (in case of static networks) available in the literature for bounding the spectral norm of centered adjacency matrix, but when these methods are applied on the sum-adjacency matrix, they produce suboptimal bounds. For example,
	Lu \& Peng (2012) use a path counting technique
	in random matrix theory, but their method would require the condition
	$Tn\ga_n \ge c(\log(n))^4$ in order to obtain a similar bound for the spectral norm. In \cite{lei2014consistency} the authors use the Bernstein inequality and a combinatorial argument
	to bound the spectral norm of the entire adjacency matrix (in the static network case), but they need the maximal expected degree to be $\ge c\log(n)$. So if we adopt that method in our setting, we would need the condition $Tn\ga  \ge c\log(n)$.
\end{rem}

The proof of Theorem \ref{normbd} involves intricate technical details, as it uses some large deviation estimates and combinatorial arguments. Our proof is partially based on the techniques used in  \cite{lei2014consistency} (originally developed by
Feige \& Ofek (2005) for bounding the second largest eigenvalue of an Erd\'{o}s-R\'{e}nyi random graph
with edge probability $\ga_n$). The details are provided in Section \ref{normbd proof}. 

\begin{proof}[Proof of Theorem \ref{ConsSum1}]
	First we will bound $n'$.  Note that for any node $i\in[n]$, $\sum_{j\in[n],t\in[T]}A^{(t)}_{i,j}$ is stochastically dominated by $X\sim Binomial(Tn,\ga)$. So using Lemma \ref{BinLDP}  and the properties of the event $\sA$ described in Theorem \ref{normbd}.
	\[\E(n-n';\sA)
	=\sum_{i\in[n]} \pr\left(\sum_{j\in[n], t\in[T]}A^{(t)}_{i,j}>e(T\bar d)^{5/4}; \sA\right) \le n \pr(X\ge eTn\ga) \le ne^{-Tn\ga}.\]
	Using the above estimate and applying Markov inequality,  
	\begin{align}   
		\text{if } & \sA':=\sA\cap\{n'\ge n(1-e^{-(1-c)Tn\ga})\},  \text{ then }\notag \\
		&
		\pr(\sA'^c)
		\le \pr(\sA^c) + \frac{\E(n-n';\sA) }{ne^{-(1-c)Tn\ga}} \le 5\exp(-c\min\{\log n, Tn\ga\gl\}).
		\label{sA^c bd}
	\end{align}
	Using Lemma \ref{misclassify bd} and Theorem \ref{normbd},  there is a constant $C$ such that
	\beq \label{f_a bd}
	\sum_{a\in[K]}f_a \le e^{-(1-c)Tn\ga}\frac{n}{n_{min}}+CK\gc_n^{-2} (Tn\ga)^{3/2} \text{ on the event } \sA'. \eeq
	In order to bound $\gc_n$,
	note that
	\[ \gamma_n = \min_{\vx \in \R^K: \vx\ne \vzero} \frac{\vx^T\vZ_{[n'],*}^T\vP\vZ_{[n'],*}\vx}{\vx^T\vZ_{[n'],*}^T\vZ_{[n'],*}\vx}  \ge \sum_{t\in[T]} \min_{\vx \in \R^K: \vx\ne 0} \frac{\vx^T\GD^2B^{(t)}\GD^2\vx}{\vx^T\GD^2\vx} .\]
	In the definition of $\gamma_n$ we consider only those vectors which belong to $\cC(\vZ_{[n'],*})$, since $\gamma_n$ is the smallest positive eigenvalue. Writing $\vy=\GD \vx$ and $\vz=\GD \vy$, the above is 
	\[ \ge \sum_{t\in[T]} \min_{\vz \in \R^K: \vz\ne 0} \frac{\vz^T\vB^{(t)}\vz}{\vz^T\vz} \min_{\vy \in \R^K: \vy\ne 0} \frac{\vy^T\GD^2\vy}{\vy^T\vy}  \ge T \ga\lambda n'_{\text{min}}.\]
	Plugging this bound for $\gamma_n$ into \eqref{f_a bd}, we get the desired result. 
\end{proof}

\subsubsection{Large deviation estimates}
The following large deviation estimates will be necessary for our argument.

\begin{lem}[\cite{D07}] \label{BinLDP}
	If $X\sim Binomial(N,p)$,  then 
	\[ 
	\pr(X\ge xNp)\le e^{-\gc(x)Np} \quad \text{ and } \quad 
	\pr(X\le yNp)\le e^{-\gc(y)Np} \]
	for all $y\le 1\le x$ where $\gc(x):=x\log(x)-x+1$ is a nonnegative convex function having unique minima at $x=1$.
\end{lem}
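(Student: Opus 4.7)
The plan is to prove both tail bounds by the standard Chernoff method, exploiting the product form of the moment generating function of a binomial. Writing $X = \sum_{i=1}^N X_i$ with $X_i \stackrel{iid}{\sim}$ Bernoulli$(p)$, the mgf factors as $\E e^{tX} = (1 - p + p e^t)^N$ for every real $t$. Applying the elementary inequality $1 + u \le e^u$ with $u = p(e^t - 1)$ then gives the clean upper bound $\E e^{tX} \le \exp\bigl(Np(e^t - 1)\bigr)$, which will be the only probabilistic input.

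For the upper tail I would apply Markov's inequality to $e^{tX}$ with $t \ge 0$ to obtain $\pr(X \ge xNp) \le \exp\bigl(Np(e^t - 1 - tx)\bigr)$, and then optimize the exponent over $t \ge 0$. Differentiating in $t$ shows the optimizer is $t^\ast = \log x$, which is nonnegative precisely because $x \ge 1$. Substituting back leaves the exponent $-Np(x \log x - x + 1) = -\gc(x) Np$, yielding the first bound. The lower tail is handled symmetrically with $t \le 0$: the same mgf bound gives $\pr(X \le yNp) \le \exp\bigl(Np(e^t - 1 - ty)\bigr)$, and the optimum $t^\ast = \log y \le 0$ is admissible exactly because $y \le 1$, producing the exponent $-\gc(y) Np$.

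Finally, I would record the stated properties of $\gc$: from $\gc'(x) = \log x$ and $\gc''(x) = 1/x > 0$ on $(0, \infty)$, the function $\gc$ is strictly convex with unique minimum at $x = 1$, where $\gc(1) = 0$, so $\gc \ge 0$ globally. There is no real obstacle in the argument; the only point warranting care is verifying that the optimizer $t^\ast$ lies in the sign region required by Markov's inequality ($t \ge 0$ for the upper tail, $t \le 0$ for the lower), which is forced by the hypotheses $x \ge 1$ and $y \le 1$ respectively. Since the lemma is attributed to \cite{D07}, a citation would suffice, but the self-contained derivation above occupies only a few lines.
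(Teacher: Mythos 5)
The paper gives no proof of this lemma; it is simply cited from \cite{D07}. Your self-contained Chernoff derivation is correct and complete: the mgf bound $\E e^{tX}\le\exp(Np(e^t-1))$, the optimization at $t^\ast=\log x$ (resp.~$\log y$), the check that the hypotheses $x\ge 1$ and $y\le 1$ place $t^\ast$ in the admissible sign region, and the computation of the exponent $-(x\log x-x+1)Np$ are all exactly the standard argument one would expect the cited reference to contain, so nothing is missing.
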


\begin{lem} \label{Algo1 A bd}
	For any $c\in(0,1)$, there are constants $C_1(c), C_2(c)>0$ such that if $n\ge 3K, Tn\ga\ge C_2 (K/\gl)^5$ and
	\[\sA_1 := \left\{\frac{1}{C_1}T\bar d \le Tn\ga \le (T\bar d)^{5/4}\right\}, \text{ then } \pr(\sA_1) \ge 1-2e^{-cTn\ga\gl}.\]
\end{lem}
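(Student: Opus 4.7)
The plan is to identify $nT\bar d$ as a sum of conditionally independent Bernoulli variables and apply the Chernoff-type estimate of Lemma~\ref{BinLDP} separately to each of the two one-sided tails of $T\bar d$.

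Conditional on $Z$, we have
\[ nT\bar d \;=\; \vone_n^T \vA_0 \vone_n \;=\; \sum_{t\in[T]}\sum_{i\ne j} A^{(t)}_{ij},\]
a sum of independent Bernoulli variables with mean
\[ \mu \;:=\; \sum_{t\in[T]}\Bigl(\vn^T \vB^{(t)} \vn - \sum_{a\in[K]} n_a B^{(t)}_{aa}\Bigr),\]
where $\vn=(n_1,\ldots,n_K)^T$ is the vector of community sizes. The upper estimate $\mu\le Tn^2\ga$ follows from $B^{(t)}_{ab}\le\ga$. For a matching lower estimate, Weyl's inequality gives $\gl_{\min}\bigl(\sum_t \vB^{(t)}\bigr)\ge \sum_t \gl_{\min}(\vB^{(t)})\ge T\ga\gl$, and Cauchy--Schwarz yields $\|\vn\|_2^2\ge n^2/K$, so $\mu\ge T\ga\gl n^2/K - Tn\ga$. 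Under $n\ge 3K$ and $Tn\ga\ge C_2(K/\gl)^5$ with $C_2=C_2(c)$ large enough, this simplifies to $\mu\ge T\ga\gl n^2/(2K)$, which dominates both the lower-tail threshold $n(Tn\ga)^{4/5}$ and a positive constant multiple of $Tn\ga\gl$.

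For the upper tail $\{T\bar d>C_1 Tn\ga\}$, the bound $\mu\le Tn^2\ga$ places it inside $\{nT\bar d\ge C_1\mu\}$, which Lemma~\ref{BinLDP} (in its standard Poisson-trials extension with the same rate function $\gc$) bounds by $\exp(-\gc(C_1)\mu)$. Choosing $C_1=C_1(c)$ so that $\gc(C_1)$ is large and invoking the lower bound on $\mu$ gives a tail probability $\le\exp(-cTn\ga\gl)$. For the lower tail $\{T\bar d<(Tn\ga)^{4/5}\}$, set $y:=n(Tn\ga)^{4/5}/\mu$; the lower bound on $\mu$ combined with $Tn\ga\ge C_2(K/\gl)^5$ yields $y\le 2K(Tn\ga)^{-1/5}/\gl\le 2C_2^{-1/5}$, which can be pushed arbitrarily close to $0$ by enlarging $C_2$. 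Since $\gc(y)\to 1$ as $y\to 0^+$, one arranges $\gc(y)\ge 2c/3$, so Lemma~\ref{BinLDP} yields $\pr(nT\bar d\le y\mu)\le\exp(-\gc(y)\mu)\le\exp(-cTn\ga\gl)$, again using $n\ge 3K$. A union bound over the two tails finishes the argument.

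The main delicate point is the joint calibration of the constants: the lower tail binds, requiring $C_2$ large enough to drive the truncation ratio $y$ into the regime where $\gc$ is close to $1$, simultaneously with producing the Weyl-based lower bound $\mu\gtrsim Tn\ga\gl$ via the absorption of the diagonal correction $Tn\ga$ into the leading $T\ga\gl n^2/K$ term. Beyond this calibration, the argument is standard Chernoff bookkeeping.
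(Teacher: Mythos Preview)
Your overall strategy matches the paper's: split into the upper and lower tails of $T\bar d$ and hit each with the binomial large-deviation bound of Lemma~\ref{BinLDP}. The paper executes this via stochastic domination by genuine binomials (dominating $\vone_n^T\vA_0\vone_n$ from above by $2Y$ with $Y\sim\text{Binomial}\bigl(T\binom{n}{2},\ga\bigr)$, and from below by $2Z$ with $Z\sim\text{Binomial}\bigl(Tn(n-K)/(2K),\ga\gl\bigr)$), whereas you work directly with the Poisson-trials mean $\mu$; that difference is cosmetic.

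There is, however, a real gap in your lower bound on $\mu$. Your Weyl-based estimate gives $\mu\ge T\ga\gl\, n^2/K - Tn\ga$, and you then assert that this ``simplifies to $\mu\ge T\ga\gl\, n^2/(2K)$'' under $n\ge 3K$ and $Tn\ga\ge C_2(K/\gl)^5$. But absorbing the correction $Tn\ga$ into the leading term requires $n\ge 2K/\gl$, which does \emph{not} follow from $n\ge 3K$ when $\gl$ is small (and one always has $\gl\le 1$, since the trace of $\vB^{(t)}$ is at most $K\ga$). The hypothesis $Tn\ga\ge C_2(K/\gl)^5$ constrains $Tn\ga$, not $n$ by itself, so it cannot rescue the step. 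For $\gl$ small your lower bound on $\mu$ can even be negative (whenever $n<K/\gl$), and both tail estimates that invoke it then collapse.

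The repair is easy and is exactly what the paper does for the lower tail: bound $\mu$ from below using only the within-community edges. Since diagonal entries of a positive-definite matrix dominate its smallest eigenvalue, $B^{(t)}_{aa}\ge\gl\ga$ for every $a\in[K]$ and $t\in[T]$; the off-diagonal contributions to $\mu$ are nonnegative, so
\[
\mu \;\ge\; T\gl\ga\sum_{a\in[K]} n_a(n_a-1)\;\ge\; T\gl\ga\Bigl(\frac{n^2}{K}-n\Bigr)\;\ge\;\frac{2}{3}\,T\gl\ga\,\frac{n^2}{K},
\]
the last inequality using $n\ge 3K$. This bound carries no subtractive correction and feeds into the rest of your argument without change.
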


\begin{proof}[Proof of Lemma \ref{Algo1 A bd}]
	Note that $\vone_n^T\vA_0\vone_n$ is stochastically dominated by $2Y$, where $Y\sim Binomial(T{n\choose 2}, \ga)$. so, if $C_1>1$ satisfies $\gc(C_1)\ge c$, then  using Lemma \ref{BinLDP}
	\[
	\pr(T\bar d\ge C_1Tn\ga)
	\le  \pr(Y\ge C_1n^2T\ga/2) \le \exp[-cTn(n-1)\ga/2].
	\]
	On the other hand, $\vone_n^T\vA_0\vone_n$ stochastically dominates $2Z$, where $Z\sim Binomial(Tn(n-K)/(2K), \ga\gl)$, because $B^{(t)}_{a,a} \ge \gl\ga$ for all $a\in[K]$ by the definition of $\gl$ and $\sum_{a\in[K]}{n_a\choose 2} \ge n(n-K)/(2K)$ using Cauchy-Schwartz inequality. Therefore, using Lemma\ref{BinLDP}
	\begin{align*}
		\pr(T\bar d\le (Tn\ga)^{4/5})
		&\le \pr(Z\le \frac n2(Tn\ga)^{4/5}) \\
		& \le \pr(Z\le zT\ga\gl\frac n2(\frac nK-1)) 
		\le \exp[-\gc(z)Tn\ga\gl], 
	\end{align*}
	where $z=(Tn\ga)^{-1/5}\frac 32\frac K\gl$. We choose $C_2$ so that $Tn\ga\ge C_2(K/\gl)^5$ implies $\gc(z)\ge c$. So the upper bound in the last display is at most $e^{-cTn\ga\gl}$. Combining the two estimates we get the result.
\end{proof}

\begin{lem} \label{max degreebd}
	For any $c>0$ there exists a constant $c_1(c)>1$ and an event $\sA_2$ satisfying $\pr(\sA_2) \ge 1-n^{-c}$ such that the following holds on $\sA_2$. For any two subsets $I,J\subset [n]$ satisfying $|I| \le |J| \le n/e$ and $e(I,J):=\sum_{i\in I, j\in J}(A_0)_{i,j}$, 
	\begin{align*}
		\text{ either } & e(I, J) \le e^{4.4}|I|\cdot |J| T\ga \\
		\text{ or } & e(I, J)\log\frac{e(I, J)}{|I|\cdot |J| T\ga} \le c_1  |J|\log\frac{n}{|J|}.
	\end{align*}
\end{lem}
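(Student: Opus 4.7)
My plan is to prove Lemma \ref{max degreebd} via the standard Feige--Ofek-style Chernoff plus union-bound argument, applied to the summed adjacency matrix $\vA_0$. First, I would fix sizes $s=|I|, u=|J|$ with $1 \le s \le u \le n/e$, and set $\mu := suT\ga$. Since $(\vA_0)_{i,j} = \sum_{\tau=1}^T A^{(\tau)}_{i,j}$ is a sum of $T$ independent Bernoulli variables with success probabilities bounded by $\ga$, the statistic $e(I,J) = \sum_{i\in I, j\in J}(\vA_0)_{i,j}$ is stochastically dominated by a $\text{Binomial}(suT, \ga)$ random variable of mean $\mu$. Lemma \ref{BinLDP} then gives $\pr(e(I,J) \ge k) \le \exp(-\mu\gc(k/\mu))$ for every $k \ge \mu$, with $\gc(x) = x\log x - x + 1$.

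The core step is the following. Let $B(I,J)$ denote the ``bad'' event that both alternatives in the lemma fail, i.e.\ $e(I,J) > e^{4.4}\mu$ \emph{and} $e(I,J)\log(e(I,J)/\mu) > c_1 u\log(n/u)$. Since $k \mapsto k\log(k/\mu)$ is monotone on $[\mu,\infty)$, on $B(I,J)$ one has $e(I,J)\ge k^*$ for some $k^*$ with $k^*/\mu \ge e^{4.4}$ and $k^*\log(k^*/\mu) \ge c_1 u\log(n/u)$. Combining this with the elementary bound $\gc(x) \ge (1 - 1/\log x)\,x\log x$ (valid for $x\ge e$) and using $\log(k^*/\mu) \ge 4.4$ yields $\mu\gc(k^*/\mu) \ge (3.4/4.4)\,k^*\log(k^*/\mu) \ge 0.77\, c_1 u\log(n/u)$. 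Hence $\pr(B(I,J)) \le \exp(-0.77\, c_1 u\log(n/u))$. The threshold $e^{4.4}$ is engineered precisely to make this numerical slack $0.77$ work in the next step.

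For the union bound over pairs of sizes $(s,u)$, I would use $\binom{n}{s}\binom{n}{u} \le \exp\bigl(s\log(en/s) + u\log(en/u)\bigr)$; since $x\log(en/x)$ is monotone on $[1,n/e]$ and $\log(en/u) \le 2\log(n/u)$ there, both summands are bounded by $2u\log(n/u)$, giving
\[ \pr\Bigl(\bigcup_{|I|=s,\,|J|=u} B(I,J)\Bigr) \le \exp\bigl((4 - 0.77 c_1)\, u\log(n/u)\bigr). \]
Using $u\log(n/u) \ge \log n$ throughout $[1,n/e]$ (the function is increasing on this range and equals $\log n$ at $u=1$), and choosing $c_1 = c_1(c)$ so that $0.77 c_1 - 4 \ge c+2$ (say $c_1 = 2(c+6)$), bounds the above by $n^{-(c+2)}$. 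Summing over the $\le n^2$ admissible $(s,u)$ yields the target bound $n^{-c}$, and $\sA_2$ is defined as the intersection of the good events over all such pairs.

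The only real obstacle is careful bookkeeping of constants; no new probabilistic input is required beyond Lemma \ref{BinLDP}. The specific choice of threshold $e^{4.4}$ is entirely in service of leaving enough room in the Chernoff exponent to simultaneously absorb the combinatorial entropy $s\log(en/s)+u\log(en/u)$ from the union bound and to produce the probability decay $n^{-c}$ demanded by the lemma.
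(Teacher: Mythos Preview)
Your plan is essentially the paper's proof: a Chernoff-type tail bound for $e(I,J)$ followed by a union bound over all admissible $(I,J)$, with the threshold $e^{4.4}$ serving exactly the purpose you describe. Your constant bookkeeping is in fact tidier than the paper's (they take $c_1=20+2c$ and carry a factor $1/4$ in the exponent where you get $0.77$). There is, however, one genuine gap.

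Your claim that $e(I,J)$ is stochastically dominated by $\text{Binomial}(suT,\ga)$ is false when $I\cap J\ne\emptyset$. The graphs are undirected, so $(\vA_0)_{i,j}=(\vA_0)_{j,i}$; hence for distinct $i,j\in I\cap J$ the variable $(\vA_0)_{i,j}$ appears \emph{twice} in $\sum_{i\in I,\,j\in J}(\vA_0)_{i,j}$, and the summands are not independent Bernoullis. For a concrete failure take $T=1$ and $I=J=\{1,2\}$: then $e(I,J)=2A_{1,2}$ equals $2$ with probability up to $\ga$, whereas $\pr(\text{Binomial}(4,\ga)\ge 2)=O(\ga^2)$ for small $\ga$. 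The paper deals with this by dominating $e(I,J)$ by $X+2Y$ for independent binomials $X,Y$ and running the MGF bound directly on $X+2Y$ with $\theta=\tfrac12\log r$ (the $\tfrac12$ tames the $e^{2\theta}$ coming from $2Y$). A quick patch that stays within your framework: observe $e(I,J)\le 2(X+Y)$ where $X+Y$ is stochastically dominated by $\text{Binomial}(suT,\ga)$, apply Lemma~\ref{BinLDP} at level $k^*/2$, and absorb the shift $k^*/\mu\mapsto k^*/(2\mu)$ into your constants; since $k^*/(2\mu)\ge e^{4.4}/2>e^{3.7}$, your slack drops only from $3.4/4.4\approx 0.77$ to about $2.7/3.7\approx 0.73$, still ample to beat the entropy $4u\log(n/u)$. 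With this correction the remainder of your argument goes through unchanged.
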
 

\begin{proof} Using the fact $P^{(t)}_{i,j} \le \ga$ for all $i, j, t$ it is easy to see that
	$e(I, J)$ is  stochastically dominated by $X+2Y$, where $X$ and $Y$ are independent and 
	\[ X \sim Binomial(T [|I|\cdot|J\setminus I|+|I\cap J|\cdot|I\setminus J|], \ga), \quad Y\sim Binomial(T|I\cap J|(|I\cap J|-1)/2, \ga).\] 
	Using this observation and Markov inequality we see that for any $r>1$ and $\theta>0$,
	\beqax
	&& \pr\left(e(I,J) \ge \frac rn |I| \cdot |J|Tn\ga\right)\\
	& \le & \pr\left(X+2Y \ge \frac rn |I| \cdot |J|Tn\ga\right) 
	\le   e^{-\theta r |I| \cdot |J|T\ga}\E e^{\theta(X+2Y)}  \\
	& = & e^{-\theta r |I| \cdot |J|T\ga} \left(\ga e^\theta + 1 -\ga\right)^{T [|I|\cdot|J\setminus I|+|I\cap J|\cdot|I\setminus J|]} \left(\ga e^{2\theta} + 1 -\ga \right)^{T|I\cap J|(|I\cap J|-1)/2} \\
	& \le &  \exp\left(-\frac 1n Tn\ga|I|\cdot |J|\left[r\theta - (e^\theta-1)(|J\setminus I|/|J|+\frac{|I\setminus J| \cdot|I\cap J|}{|I|\cdot|J|}) - (e^{2\theta}-1)\frac{|I\cap J|(|I\cap J|-1)}{2|I|\cdot |J|}\right]\right).
	\eeqax
	The last inequality follows because $1+x \le e^x$ for all $x \in \dR$. Putting $\theta=\frac 12 \log r$ in the above inequality we find that for any $r\ge e^{4.4}$
	\beqa   \pr\left(e(I,J) \ge \frac rn |I| \cdot |J|Tn\ga\right)
	& \le & \exp\left(-\frac 1n Tn\ga|I|\cdot |J|\left[\frac 12 r\log r - r^{1/2} - r\right]\right) \notag \\
	& \le & \exp\left(-\frac 14 |I|\cdot |J| T\ga r\log r\right). \label{eIJbd}\eeqa
	For a given number  $c$ we define $c_1:=20+2c$  and $k(I,J)$ to be the
	number 
	\[ \text{such that   }  k(I, J) \log[k(I, J)] = c_1\frac{|J|}{|I|\cdot |J|T\ga} \log\frac{n}{|J|},\]
	and let $r(I, J):=\max\{e^{4.4}, k(I, J)\}$. Since $r \mapsto r\log r$ is an increasing function, \eqref{eIJbd} suggests
	\beqax 
	\pr\left(e(I, J) \ge \frac 1n r(I, J) |I| \cdot |J|Tn\ga\right) 
	&\le& \exp\left(-\frac 14 |I|\cdot |J| T\ga k(I, J)\log k(I, J)\right) \\
	&\le& \exp\left(-\frac{c_1}{4} |J|\log\frac{n}{|J|}\right).\eeqax
	Using union bound and the above inequality if 
	\[ \sA_2 :=\left\{e(I, J) < \frac 1n r(I, J) |I| \cdot |J|Tn\ga \text{ for all } I, J \subset [n] \text{ with } |I|\le |J| \le n/e\right\}, \text{ then } \]
	\beqax
	\pr(\sA_2^c)  & \le & \sum_{\{(i,j): 1\le i\le j\le n/e\}} \sum_{\{(I, J): I, J \subset [n]: |I|=i, |J|=j\}} \exp\left(-\frac{c_1}{4}j\log\frac{n}{j}\right) \\
	& \le & \sum_{\{(i,j): 1\le i\le j\le n/e\}} {n\choose i} {n\choose j} \exp\left(-\frac{c_1}{4}j\log\frac{n}{j}\right) \\
	& \le & \sum_{\{(i,j): 1\le i\le j\le n/e\}} (ne/i)^i (ne/j)^j \exp\left(-\frac{c_1}{4}j\log\frac{n}{j}\right) \\
	& \le & \sum_{\{(i,j): 1\le i\le j\le n/e\}}  \exp\left(-\frac{c_1}{4}j\log\frac{n}{j}+i\log\frac{n}{i}+i+j\log\frac{n}{j}+j\right) \\
	& \le & \sum_{\{(i,j): 1\le i\le j\le n/e\}}  \exp\left(-\frac{c_1-16}{2}j\log\frac{n}{j}\right) \\
	& \le & \sum_{\{(i,j): 1\le i\le j\le n/e\}} n^{-\frac 12 (c_1-16)} \le  n^2 \cdot n^{-\frac 12 (c_1-16)} = n^{-\frac 12 (c_1-20)} =n^{-c}.
	\eeqax
	On the event $\sA_2$, if $I, J \subset [n]$ satisfies $|I| \le |J| \le n/e$, then
	\begin{itemize}
		\item[either] $r(I, J) =e^{4.4}$ in which case $e(I, J) \le e^{4.4} |I|\cdot|J|T\ga$,
		\item[or]  $r(I, J) = k(I,J)$ in which case
		\[\frac{e(I, J)}{|I|\cdot|J|T\ga} \log \frac{e(I, J)}{|I|\cdot|J|T\ga}
		\le k(I, J)\log k(I, J) = c_1 \frac{|J|}{|I|\cdot|J|T\ga} \log\frac{n}{|J|}.\]
	\end{itemize}
	This completes the proof.
\end{proof}

\subsubsection{Proof of Theorem \ref{normbd}} \label{normbd proof}
\begin{proof}[Proof of Theorem \ref{normbd}]
	Given $c\in(0,1)$, let $C_1, C_2, c_1$ be the constants and $\sA_1, \sA_2$ be the events appearing in Lemma \ref{Algo1 A bd} and \ref{max degreebd}. We will take $\sA:=\sA_1 \cap \sA_2 \cap \sA_3$, where $\sA_3$ is defined in \eqref{light}, and $C'=C_2$.
	
	We will write $\bar \vA^{(t)}$ (resp.~$\vA$) to denote $\vA^{(t)} - \E \vA^{(t)}$ (resp.~$\vA - \E\vA$).  
	Clearly $\vA-\vP = \bar\vA-Diag(\vP)$ and $||Diag(\vP)||=\max_{a\in[K]} B_{a,a} \le T\ga $. In order to bound $||\bar\vA||$,  we will use the fact
	(see \eg \cite[Lemma B.1]{lei2014consistency}) that  
	\begin{align*} 
		\text{ if } & S := \{\vx=(x_1, x_2, \ldots, x_{n'}): ||\vx||_2 \le 1, 2\sqrt{n'}x_i \in \Z \;\forall \;i\}, \\
		\text{ then } & ||\vW|| \le 4\sup_{\vx,\vy \in S} |\vx^T\vW\vy| \text{ for any symmetric matrix $\vW\in\R^{n'\times n'}$.}
	\end{align*}
	Our argument for bounding $\sup_{\vx,\vy\in S}|\vx^T\bar\vA\vy|$ involves the following two main steps:  bounding the contribution of  (1) {\it light pais} and (2){\it heavy pairs}.
	For $\vx, \vy \in S$, we split the pairs $(x_i, y_j)$ into {\it light pairs} $L$ and {\it heavy pairs} $\bar L$:
	\[ L(\vx, \vy) := \left\{(i,j): |x_iy_j| \le \sqrt{Tn\ga}/n\right\}, \bar L(x, y) := [n']\times[n'] \setminus L(\vx,\vy). \] 
	\begin{enumerate}
		\item {\bf Bounding the contribution of  light pairs.} Here we will show that 
		\begin{equation} \label{light}
			\text{if } \sA_3:=\left\{\sup_{\vx, \vy \in S}\left|\sum_{(i,j) \in L(\vx,\vy)} x_iy_j \bar A_{i,j}\right| \le C_3\sqrt{Tn\ga}\right\}, 
		\end{equation}
		then $\pr(\sA_3)\ge 1- e^{-cn}$, provided
		$C_3>0$ is large enough.
		\item {\bf Bounding  the contribution of heavy pairs.} Here we will show that  there is a constant $C_4>0$ such that
		\begin{equation} \label{heavy}
			\sup_{\vx, \vy \in S}\left|\sum_{(i,j) \in \bar L(\vx,\vy)} x_iy_j\bar A_{i,j}\right| \le C_4 (Tn\ga)^{3/4}
			\text{ on the event } \sA_1\cap\sA_2.
		\end{equation}
	\end{enumerate} 
	This will complete the proof of the theorem.
	
	To show \eqref{light}, we will use the Bernstein's inequality. Fix $\vx, \vy \in S$. Define
	$u_{i,j} := x_iy_j \mathbf 1(|x_iy_j| \le \sqrt{T\ga/n}+ x_jy_i \mathbf 1(|x_jy_i| \le \sqrt{T\ga/n}$ for $i, j \in [n']$.
	Clearly $|u_{i,j}| \le 2\sqrt{T\ga/n}$ and
	\beq \label{ubd1}
	\sum_{1\le i<j\le n'} u_{i,j}^2 \le 2 \sum_{1\le i<j\le n'} [(x_iy_j)^2 + (x_jy_i)^2 ] = 2||\vx||_2^2||\vy||_2^2\le2 \eeq
	It is easy to see that $\sum_{(i,j)\in L(\vx,\vy)}x_iy_j\bar A_{i,j} = \sum_{1\le i<j\le n', t\in[T]} u_{i,j} \bar A^{(t)}_{ij}$. Also,  the summands in the last sum are independent, each summand has mean 0 and is bounded by $2\sqrt{T\ga/n}$.  So, using union bound, Bernstein's inequality~and~\eqref{ubd1},
	\begin{align*} 
		\pr(\sA_3^c) \le & |S|^2 \exp\left(-\frac{\frac 12 C_3^2Tn\ga}{\sum_{1\le i<j\le n', t\in[T]}u_{i,j}^2 P^{(t)}_{i,j} + \frac 23\sqrt{T\ga/n}\cdot C_3\sqrt{Tn\ga}}\right) \\
		\le & |S|^2 \exp\left(-\frac{\frac 12 C_3^2Tn\ga}{2T\ga + \frac {2C_3T\ga}{3} }\right) 
		\le  |S|^2 \exp\left(-n\frac{C_3^2}{4+\frac 43 C_3}\right).
	\end{align*} 
	A standard volume argument (see \eg \cite[Claim 2.9]{FO05}) gives $|S| \le e^{\log(14) n}$. This together with the last display proves \eqref{light}.
	
	To show \eqref{heavy}, first note that for any $\vx, \vy \in S$
	\begin{align*}
		\left|\sum_{(i,j) \in \bar L(\vx,\vy)} x_i y_j P_{i,j}\right|
		\le \sum_{(i,j) \in \bar L(\vx,\vy)} \frac{x_i^2 y_j^2}{|x_i y_j|}  
		T\ga
		\le \sqrt{Tn\ga} ||\vx||_2^2 ||\vy||_2^2 
		\le \sqrt{Tn\ga}. 
	\end{align*}
	So it remains to bound 
	$\sup_{\vx, \vy \in S} |\sum_{(i,j) \in \bar L(\vx,\vy)} x_iy_j A_{i,j}|$.
	We also note that each $\bar L$ is union of four sets $\bar L_{++}, \bar L_{+-}, \bar L_{-+}$ and $\bar L_{--}$, where 
	$\bar L_{\pm\pm}(\vx, \vy) :=  \{(i,j) \in \bar L(\vx,\vy): \pm x_i> 0, \pm y_j> 0\}$, and  it suffices to bound
	$\sup_{\vx, \vy \in S} |\sum_{(i,j) \in \bar L_{++}(\vx,\vy)} x_iy_j A_{i,j}|$,
	as the arguments for bounding the other three terms are similar.
	To do so we fix $\vx, \vy \in S$ and define the index sets
	\begin{align*}
		I_1:=\left\{i:\frac{2^{-1}}{\sqrt n} \le x_i \le \frac{1}{\sqrt n}\right\}, I_s:=\left\{i:\frac{2^{s-2}}{\sqrt n} < x_i \le \frac{2^{s-1}}{\sqrt n}\right\},  \\
		J_1:=\left\{j:\frac{2^{-1}}{\sqrt n} \le y_j \le \frac{1}{\sqrt n}\right\}, J_t:=\left\{j:\frac{2^{t-2}}{\sqrt n} < y_j \le \frac{2^{t-1}}{\sqrt n}\right\}  
	\end{align*}
	for $s, t = 2, 3,  \ldots\lceil\log_2(2\sqrt n)\rceil$.
	It is easy to see that $|I_s|\le 2^{s}$ and $|J_t| \le  2^t$, both of which are at most $4\sqrt n$.
	
	For two subsets of vertices $I, J\subset [n']$, let 
	\begin{align*}
		& e(I,J):=\sum_{i\in I, j\in J} A_{i,j},  \lambda_{s,t}:=\frac{e(I_s,J_t)}{|I_s|\cdot |J_t|T\ga}, \alpha_s:=|I_s|2^{2s}/n, \beta_t:=|J_t|2^{2t}/n, \text{ so}\\
		& \sum_{(i,j) \in \bar L_{++}(\vx,\vy)} x_iy_j A_{i,j}\le \sum_{(s,t): 2^{s+t} \ge \sqrt{Tn\ga}} e(I_s, J_t)\frac{2^{s+t}}{n} 
		\le  \sqrt{Tn\ga} (L_+ + L_-), \text{ where }\\
		& L_\pm:=\sum_{(s,t)\in\sS_\pm} \alpha_s\beta_t \frac{\lambda_{s,t} \sqrt{Tn\ga}}{2^{s+t}},  \sS_\pm:= \{(s, t) : 2^{s+t} \ge \sqrt{Tn\ga}, \pm(|I_s| -  |J_t|) \le 0\}.
	\end{align*}
	Now note that the argument for bounding $L_+$ can be immited (after interchanging the role of $(s,I_s,\ga_s)$ and $(t, J_t,\gb_t)$) to give a similar bound for $L_-$. So,  it suffices to show
	\beq \label{sumbd}
	L_+ \le C_5 (Tn\ga)^{1/4} \text{ on the event } \sA_1 \cap \sA_2.
	\eeq
	In order to show  \eqref{sumbd}, we further divide $\sS_+$ into subsets
	\beqax
	\sS_1 & := & \{(s, t) \in \sS_+ : \gl_{s,t} \le 2^{s+t}/\sqrt{Tn\ga}\} \\
	\sS_2 & := & \{(s, t) \in \sS_+\setminus \sS_1 : \gl_{s,t} \le e^{4.4}\} \\
	\sS_3 & := & \{(s, t) \in \sS_+\setminus \cup_{i=1}^2 \sS_i : 2^s \ge 2^t\sqrt{Tn\ga}\} \\
	\sS_4 & := & \{(s, t) \in \sS_+\setminus \cup_{i=1}^3 \sS_i: \log \gl_{s,t} \ge \frac 14 [2t \log 2 + \log(1/\gb_t)]\} \\
	\sS_5 & := & \{(s, t) \in \sS_+\setminus \cup_{i=1}^4 \sS_i : 2t \log 2 \ge \log(1/\gb_t)\} \\
	\sS_6 & := & \sS_+\setminus \cup_{i=1}^5\sS_i,
	\eeqax
	and show how to bound the sums $\sum_{(s,t)\in\sS_i} \ga_s\gb_t\gl_{s,t}\sqrt{Tn\ga}2^{-s-t}, i\in[6],$ separately. We will need the fact
	$\sum_s \ga_s \le \sum_i 4x_i^2 \le 4$ and  $\sum_t \gb_t \le \sum_i 4y_i^2 \le 4$.
	Using the last bound
	\beqa \label{S12bd} 
	&& \sum_{(s, t) \in \sS_1}  \alpha_s\beta_t \frac{\lambda_{s,t} \sqrt{Tn\ga}}{2^{s+t}}
	\le \sum_{s,t} \ga_s\gb_t \le 16, \\
	&&  \sum_{(s, t) \in \sS_2}  \alpha_s\beta_t \frac{\lambda_{s,t} \sqrt{Tn\ga}}{2^{s+t}}
	\le \sum_{(s, t) \in \sS_2}  \alpha_s\beta_t \lambda_{s,t} 
	\le e^{4.4} \sum_{s,t} \ga_s\gb_t \le 16e^{4.4}. \notag\eeqa
	Next note that $e(I,J) \le |I| \max_{i\in [n']} \sum_{j\in[n'], t\in [T]}A^{(t)}_{i,j}\le |I|(T\bar d)^{5/4}$ on the event $\sA_1$.
	Also note that for any $s$ the sum $\sum_{t: (s,t) \in \sS_3} \sqrt{ Tn\ga }2^{-s+t} \le \sum_{i\ge 0}2^{-i}$ by the definition of $\sS_3$. Using these bounds and the definition of $\sA_1$,
	\begin{align} 
		\sum_{(s, t) \in \sS_3}  \alpha_s\beta_t \frac{\lambda_{s,t} \sqrt{ Tn\ga }}{2^{s+t}}
		\le  \sum_{s} \ga_s \sum_{t: (s,t) \in \sS_3} \frac{|J_t|2^{2t}}{n}  \frac{ |I_s|(T\bar d)^{5/4}}{|I_s|\cdot |J_t| T\ga}\sqrt{Tn\ga}2^{-s-t} \notag \\
		\le  \frac{ (T\bar d)^{5/4}}{Tn\ga} \sum_{s} \ga_s \sum_{t: (s,t) \in \sS_3} \frac{\sqrt{Tn\ga}}{2^{s-t}}  
		\le 8\frac{ (T\bar d)^{5/4}}{Tn\ga} = 8C_1^{5/4} (Tn\ga)^{1/4} \text{ on } \sA_1.  \label{S3bd}
	\end{align}
	On the event $\sA_2$, it is easily seen that for each $(s,t) \in \cup_{i=4}^6\sS_i$ (as $(s,t) \not\in \sS_2$)
	\[ e(I_s, J_t) \log\frac{e(I_s, J_t)}{|I_s|\cdot |J_t| T\ga} \le c_1  |J_t|\log\frac{n}{|J_t|}\]
	which can be checked (after a straight forward algebraic manipulation)  to be  equivalent with the condition
	\beq\label{eIJcond}
	\ga_s\frac{\sqrt{ Tn\ga }}{2^{s+t}} \gl_{s,t}\log\gl_{s,t} 
	\le c_1 \frac{2^{s-t}}{\sqrt{ Tn\ga }}
	[2t\log 2+\log(\gb_t^{-1})].
	\eeq
	Now, if $(s,t) \in \sS_4$, then \eqref{eIJcond} will imply $\ga_s\gl_{s,t}\sqrt{ Tn\ga }2^{-s-t} \le 4c_1 2^{s-t}/\sqrt{ Tn\ga }$ Also, for any $t$  the sum $\sum_{s: (s,t) \in \sS_4} 2^{s-t}/\sqrt{ Tn\ga }$ is at most the geometric sum $\sum_{i\ge 0} 2^{-i}$, because $(s,t) \in \sS_4$ ensures (as $(s,t) \not\in \sS_3$) $2^{s-t} \le \sqrt{ Tn\ga }$. From these two observations it follows easily that
	\beq
	\sum_{(s,t) \in \sS_4}  \frac{\ga_s\gb_t \gl_{s,t}\sqrt{ Tn\ga }}{2^{s+t}}
	\le \sum_t \gb_t \sum_{s: (s,t) \in \sS_4}   \frac{4c_12^{s-t}}{\sqrt{ Tn\ga }} 
	\le 8c_1\sum_t \gb_t \le 32c_1. \label{S4bd}
	\eeq
	Next we see that if $(s,t) \in \sS_5$, then (as $(s,t) \not\in \sS_2$) $\log\gl_{s,t} \ge 1$ which gives a lower bound for the LHS of \eqref{eIJcond}. We also get an upper bound for the RHS of \eqref{eIJcond} replacing $\log(\gb_t^{-1})$ by another $2t\log 2$. Combining these two bounds,  
	$\ga_s\gl_{s,t} \sqrt{ Tn\ga }2^{-s-t} \le c_1\frac{2^{s-t}}{\sqrt{ Tn\ga }} 4t\log 2\le 4\log(2)c_12^s/\sqrt{ Tn\ga }$
	for all $(s,t) \in \sS_5$. On the other hand, $(s,t) \in \sS_5$ implies
	$\log\gl_{s,t} \le \frac 14[2t\log 2 + \log(\gb_t^{-1})] \le t\log 2$.  Also, $(s,t) \in \sS_5$ ensures (as $(s,t) \not\in \sS_1$) $\gl_{s,t}\sqrt{ Tn\ga }2^{-s-t} \ge 1$. Combining these two facts we get $2^s \le \sqrt{ Tn\ga }$, which implies $\sum_{s:(s,t) \in \sS_5} 2^s/\sqrt{ Tn\ga } \le \sum_{i\ge 0} 2^{-i}$ for each $t$.  Therefore, 
	\beq
	\sum_{(s,t) \in \sS_5}  \frac{\ga_s\gb_t \gl_{s,t}\sqrt{ Tn\ga }}{2^{s+t}}
	\le 4\log(2)c_1\sum_t \gb_t \sum_{s: (s,t) \in \sS_5} \frac{2^s}{\sqrt{ Tn\ga }} 
	\le 32c_1
	\label{S5bd}\eeq
	Finally for $(s,t) \in \sS_6$  , $\log\gl_{s,t}$ is at most $\frac 14 [2t\log 2 + \log(\gb_t^{-1})]$ (as $(s,t) \not\in\sS_4$), which is bounded by $\frac 12\log(\gb_t^{-1})$. $(s,t) \in \sS_6$ also ensures (as $(s,t) \not \in \sS_2$) $\log\gl_{s,t}$ is positive. Combining these two facts,  $\gb_t\gl_{s,t} \le 1$ for $(s,t) \in \sS_6$. Combining this with the fact $\sum_{t: (s,t) \in \sS_+} \sqrt{ Tn\ga } 2^{-s-t} \le \sum_{i\ge 0} 2^{-i}$ we get
	\beq
	\sum_{(s,t) \in \sS_6} \ga_s\gb_t \gl_{s,t} \frac{\sqrt{ Tn\ga }}{2^{s+t}}
	\le \sum_s \ga_s \sum_{t: (s,t) \in \sS_6} \sqrt{ Tn\ga } 2^{-s-t} \le 2\sum_s \ga_s \le 8.
	\label{S6bd}\eeq
	Combining the conclusions of \eqref{S12bd},  \eqref{S3bd},  \eqref{S4bd},  \eqref{S5bd} and ,  \eqref{S6bd} completes the argument for showing \eqref{sumbd}, and hence proves the desired theorem.
\end{proof}

\subsection{Consistency of Spherical Spectral Clustering Labels $\check\vZ_0$ under DDCBM}
In this section, we prove the result about the consistency of $\check\vZ_0$ for networks generated from \textbf{DDCBM}.
\begin{thm} \label{ConsSum2}
	Let $\{\vA^{(t)}\}_{t=1}^T$ be the adjacency matrices of the networks generated from the DDCBM
	with parameters $(\vZ, \V{\pi}, \{\vB^{(t)}\}_{t=1}^T, \V{\psi})$, where
	\begin{itemize}
		\item $T\ge 1$ is the number of networks
		\item $n$ is the number of nodes, $K$ is the number of communities (having labels $\cC_1, \cC_2, \ldots, \cC_K$) obeying $n\ge 3K$
		\item $\{\vB^{(t)}\}_{t=1}^T$ satisfy assumption  \eqref{eq_sum_ass}
		\item $\V{\psi}$  satisfies \eqref{eq_dcbm_id}
		\item $\tau_k := \sum_{i\in\cC_k} \psi_i^2 \sum_{i\in\cC_k} \psi_i^{-2}, k\in[K],$ be a measure of heterogeneity of $\V{\psi}$.
		\item $\ga=\ga(n,T):=\max_{a,b \in[K], t\in [T]} B^{(t)}_{a,b}$ is the maximum connection probability of an edge at any time, and $\gl=\gl(n,T)>0$ is such that $\gl\ga$ is the smallest eigenvalue of  $(\vB^{(t)}, t\in[T])$.
	\end{itemize}
	For any $\gee>0$ and $c\in(0,1)$, 
	there are constants $C_1=C_1(\gee,c), C_2=C_2(c)>0$ such that if $Tn\ga\ge C_2(K/\gl)^5$
	and if  $\check\vZ_0$ is the estimate of $\vZ$ as described in Algorithm 2, then the overall fraction of misclassified nodes in $\check\vZ_0$ is 
	\beq \label{overall misclassify} 
	\le e^{-(1-c)Tn\ga}+C \frac{\left(\sum_{k\in[K]} \tau_k\right)^{1/2}}{\min_{k\in[K]}\sum_{i\in\cC_k\cap\{k_1, k_2, \ldots, k_{n'}\}}\psi_i^2} \frac{\sqrt K}{\gl (Tn\ga)^{1/4}}. \eeq
	with probability at least $1-5\exp(-c\min\{Tn\ga\gl, \log(n)\})$.
	
	Therefore, in the special case,  when 
	(i)  $\gl>0$, \; (ii) the community sizes are balanced, i.e.~$n_{\text{max}}/n_{\text{min}} = O(1)$ and
	(iii) $\psi_i=\alpha_i/\max\{\alpha_j: z_i=z_j\}$, where $(\alpha_i)_{i=1}^n$ are  i.i.d.~weights having bounded support, 
	then consistency holds for $\check\vZ_0$ with probability $1-o(1)$ if $\E\alpha_1^{-2}<\infty$ and $Tn\ga\gl\to\infty$.
\end{thm}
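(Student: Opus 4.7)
My plan is to mirror the strategy used for Theorem \ref{ConsSum1} but to run the analysis on the row-normalized eigenvector matrix using a $K$-medians guarantee, exploiting the fact that the spherical normalization ``cancels'' the degree parameters $\V\psi$. First I will set up the population matrix $\tilde\vP := \sum_{t\in[T]} \tilde\vP^{(t)} = \gTh \vB \gTh^T$, where $\gTh := Diag(\V\psi)\, \vZ_{[n'],*}$ and $\vB := \sum_{t\in[T]} \vB^{(t)}$, and describe its leading $K$ eigenvectors in the form $\tilde\vU = \gTh\GD^{-1}\vR$, with $\GD^2 := \gTh^T\gTh = Diag(\sum_{i\in\cC_k\cap\{k_1,\ldots,k_{n'}\}}\psi_i^2,\, k\in[K])$ and $\vR$ the orthogonal matrix of eigenvectors of $\GD\vB\GD$. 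The key observation is that $\tilde\vU_{i,*} = \psi_i (\GD^{-1}\vR)_{z_i,*}$, so the normalized rows $\tilde\vU^+_{i,*} = (\GD^{-1}\vR)_{z_i,*}/\norm{(\GD^{-1}\vR)_{z_i,*}}_2 =: \vv_{z_i}$ depend only on the community label; moreover the rows of $\GD^{-1}\vR$ are orthogonal, so the $K$ population centers $\vv_k$ are orthonormal and hence pairwise separated by $\sqrt 2$.

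Next I will invoke Theorem \ref{normbd}, which applies verbatim in the DDCBM setting because the entries of $\tilde\vP^{(t)}$ are still bounded by $\ga$ (as $\psi_i\le 1$), to get $\norm{\vA - \tilde\vP} \le C(Tn\ga)^{3/4}$ on the stated event. I then bound the smallest positive eigenvalue $\gamma_n$ of $\tilde\vP$ via the same Rayleigh-quotient decomposition used at the end of the proof of Theorem \ref{ConsSum1}, now reading $\gamma_n \ge T\ga\gl \cdot M$ with $M := \min_{k\in[K]}\sum_{i\in\cC_k\cap\{k_1,\ldots,k_{n'}\}}\psi_i^2$. Combined with the Davis--Kahan $\sin\gTh$ theorem, these two estimates supply an orthogonal matrix $\vO\in\R^{K\times K}$ with $\norm{\hat\vU - \tilde\vU\vO}_F \le \sqrt{2K}\,\norm{\vA - \tilde\vP}/\gamma_n$.

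The hardest step will be converting this Frobenius-norm bound on $\hat\vU - \tilde\vU\vO$ into a $(2,1)$-norm bound on $\hat\vU^+ - \tilde\vU^+\vO$, since row normalization is only locally Lipschitz and behaves poorly at nodes whose population row has small norm $\norm{\tilde\vU_{i,*}}_2 = \psi_i/\sqrt{\sum_{j\in\cC_{z_i}}\psi_j^2}$. My plan is to apply the standard estimate $\norm{u/\norm{u}_2 - v/\norm{v}_2}_2 \le 2\norm{u-v}_2/\max(\norm{u}_2, \norm{v}_2)$ row-by-row (together with the trivial bound of $2$ and a separate accounting for rows dropped in the spherical step), and then to collapse the row-wise bounds via Cauchy--Schwarz:
\[ \norm{\hat\vU^+ - \tilde\vU^+\vO}_{2,1} \;\le\; C\,\norm{\hat\vU - \tilde\vU\vO}_F \cdot \Bigl(\sum_i \tfrac{\sum_{j\in\cC_{z_i}}\psi_j^2}{\psi_i^2}\Bigr)^{1/2} = C\Bigl(\sum_{k\in[K]}\tau_k\Bigr)^{1/2} \norm{\hat\vU - \tilde\vU\vO}_F. \]
It is precisely this step that produces the heterogeneity factor $(\sum_k\tau_k)^{1/2}$ in the final bound.

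Finally I will translate this $(2,1)$-norm error into a misclassification count via the $(1+\gee)$-approximate $K$-medians guarantee \eqref{k median} combined with the pairwise separation $\sqrt 2$ of the population centers $\{\vv_k\}$, following Lei and Rinaldo (2015); the count of misclassified rows is $\le C(\gee)\norm{\hat\vU^+ - \tilde\vU^+\vO}_{2,1}$. Dividing by $n$ and substituting all the previous bounds yields the main term of \eqref{overall misclassify}; the term $e^{-(1-c)Tn\ga}$ accounts for the at most $n-n'$ high-degree nodes discarded in the truncation step of Algorithm 2 and is handled exactly as in \eqref{sA^c bd} in the proof of Theorem \ref{ConsSum1}.
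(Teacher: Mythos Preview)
Your proposal is correct and follows essentially the same route as the paper's proof: the paper likewise writes $\tilde\vP$ via $\V{\Psi}_{[n'],*}=Diag(\V\psi)\vZ_{[n'],*}$, extracts the population eigenvectors $\tilde\vU=\V{\Psi}_{[n'],*}\tilde\GD^{-1}\tilde\vR$, reuses Theorem \ref{normbd} and the argument of \eqref{sA^c bd} (since $\tilde P^{(t)}_{i,j}\le\ga$), applies Davis--Kahan to get \eqref{Frob_bd4}, and then converts to the $(2,1)$-norm via the same row-normalization inequality plus Cauchy--Schwarz, obtaining the factor $(\sum_k\tau_k)^{1/2}$; the $K$-median step with the $\sqrt 2$ separation and the lower bound $\gamma_n\ge T\ga\gl\,\tilde n'_{\min}$ are also identical. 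The only cosmetic differences are notational (your $\gTh,\GD,\vR,\vO$ versus the paper's $\V{\Psi},\tilde\GD,\tilde\vR,\tilde\vQ$) and that the paper packages the $K$-median and $(2,1)$ steps into a separate Lemma \ref{CorrectClassify}.
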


\subsubsection{Proof of Theorem \ref{ConsSum2}}
\label{sec_proof_checkZ_supp}
\begin{proof}[Proof of Theorem \ref{ConsSum2}]
	Without loss of generality, we can assume that $k_i=i$ and $l_j=j$ for all $i\in[n']$ and $j\in[n'']$.
	Define 
	\begin{align} 
		\V{\Psi} := Diag(\V{\psi})\cdot\vZ,  \quad \tilde n_k':=\norm{\V{\Psi}_{[n'],k}}_2^2 \text{ for } k\in [K], \notag \\
		\tilde\GD := Diag\left(\sqrt{\tilde n_1'}, \sqrt{\tilde n_2'}, \ldots, \sqrt{\tilde n_K'}\right),  \quad \tilde \vB  := \sum_{t=1}^T \tilde \GD \vB^{(t)}\tilde \GD,\notag \\
		\tilde \vP  := \sum_{t=1}^T \V{\Psi}_{[n'],*} \vB^{(t)}\V{\Psi}_{[n'],*}^T = \V{\Psi}_{[n'],*}\tilde \GD^{-1}\tilde \vB\tilde \GD^{-1}\V{\Psi}_{[n'],*}^T. 
		\label{Ptildef}
	\end{align}
	Using the fact that the columns of $\V{\Psi}_{[n'],*}\tilde\GD^{-1}$ are orthonormal, it is clear from the last display that if the eigenvalue decomposition  of $\tilde\vB$ is $\tilde\vR\tilde\vD\tilde\vR^T$, then the leading eigenvectors of $\tilde\vP$ are given by the columns of $\V{\Psi}_{[n'],*}\tilde\GD^{-1}\tilde\vR =: \tilde\vU$.  
	Following the argument which leads to \eqref{Frob_bd} we get that there is an orthogonal $\tilde\vQ \in \R^{K \times K}$ such that 
	\beq  \label{Frob_bd4}
	\norm{\hat \vU - \tilde\vU\tilde\vQ}_F \le \frac{2\sqrt{2K}}{\gamma_n}\norm{\vA - \tilde\vP}.
	\eeq
	Also, one can  use the fact that $\tilde P^{(t)}_{i,j} \le P^{(t)}_{i,j}$ for all $i, j, t$ and repeat the  argument leading to \eqref{sA^c bd} with $\vP$ replaced by $\tilde\vP$, to conclude that for any $c\in(0,1)$ there are constants $C, C'>0$ (depending on $c$) such that if $Tn\ga \ge C'(K/\gl)^5$~and
	\beq  \label{spectralnormbd4}
	\tilde\sA := \left\{\norm{\vA - \tilde\vP} \le C(Tn\ga)^{3/4} \text{ and } n'\ge n(1-e^{-(1-c)Tn\ga})\right\},  
	\eeq
	then $\pr(\tilde\sA)\ge 1-5\exp[-c\min\{\log(n), Tn\ga\gl\}]$.

	Next, we normalize the nonzero rows of $(\tilde\vU\tilde\vQ)_{[n''],*}$ to obtain  
	$\tilde\vU^+$,  and 
	\[   \text{define  }\check S  := \left\{j \in [n'']: \norm{\check\vZ^+_{j,*}\check \vX - \tilde\vU^+_{j,*}}_2 \ge 1/\sqrt 2\right\} \quad \text{ and } \quad  \check S' := [n''] \setminus \check S.\]
	Using standard argument (see Appendix for details), 
	\begin{lem} \label{CorrectClassify}
		All nodes with indices in $\check S'$ are correctly classified in $\check\vZ$, and $n'-|\check S'|\le  C(\eps)  \left(\sum_{k\in[K]} \tau_k\right)^{1/2} \norm{\hat  \vU-\tilde\vU\tilde\vQ}_F$ for some constant $C(\eps)$. 
	\end{lem}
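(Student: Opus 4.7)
The plan is to split the argument into two parts: showing that every index in $\check S'$ receives the correct label, and bounding the total miscount $n' - |\check S'|$.

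First I would exploit the structure of $\tilde\vU$. Because $\V{\Psi}_{[n'],*}\tilde\GD^{-1}$ has orthonormal columns, the $i$-th row of $\tilde\vU$ equals $(\psi_i/\sqrt{\tilde n_{z_i}'})\tilde\vR_{z_i,*}$, so after normalization and the rotation by $\tilde\vQ$, $\tilde\vU^+_{i,*} = \tilde\vR_{z_i,*}\tilde\vQ$ for each $i \in [n'']$. In particular, two such rows coincide when the nodes share a cluster and are at distance exactly $\sqrt 2$ otherwise (since $\tilde\vR$ and $\tilde\vQ$ are orthogonal). Correct classification on $\check S'$ then follows by a triangle-inequality argument: if $i, j \in \check S'$ had $\check\vZ^+_{i,*} = \check\vZ^+_{j,*}$ but $z_i \ne z_j$, then $\sqrt 2 = \norm{\tilde\vU^+_{i,*} - \tilde\vU^+_{j,*}}_2 \le \norm{\check\vZ^+_{i,*}\check\vX - \tilde\vU^+_{i,*}}_2 + \norm{\check\vZ^+_{j,*}\check\vX - \tilde\vU^+_{j,*}}_2 < 2/\sqrt 2 = \sqrt 2$, a contradiction; hence the map from estimated labels to true labels is injective on $\check S'$, which is the standard notion of clustering correctness.

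For the count bound, I would first control $|\check S|$. The competitor $\V{\gC}$ with rows $e_{z_i}^T$ together with $\vX=\tilde\vR\tilde\vQ$ realises $\tilde\vU^+$ in the form $\V{\gC}\vX$, so the approximate $K$-median guarantee \eqref{k median} gives $\norm{\check\vZ^+\check\vX - \hat\vU^+}_{2,1} \le (1+\gee)\norm{\tilde\vU^+ - \hat\vU^+}_{2,1}$. Combining with the triangle inequality and the fact that each summand in $\sum_{j\in\check S}\norm{\check\vZ^+_{j,*}\check\vX - \tilde\vU^+_{j,*}}_2$ is at least $1/\sqrt 2$ yields $|\check S| \le \sqrt 2\,(2+\gee)\norm{\hat\vU^+ - \tilde\vU^+}_{2,1}$. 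The key calculation is then to convert the $(2,1)$-norm on normalized rows back to the Frobenius norm on the raw matrices: applying $\norm{a/\norm{a}_2 - b/\norm{b}_2}_2 \le 2\norm{a-b}_2/\norm{b}_2$ with $\norm{\tilde\vU_{i,*}}_2 = \psi_i/\sqrt{\tilde n_{z_i}'}$ and Cauchy--Schwarz over $i$ yields $\norm{\hat\vU^+ - \tilde\vU^+}_{2,1} \le 2\bigl(\sum_{i\in[n'']} \tilde n_{z_i}'/\psi_i^2\bigr)^{1/2}\norm{\hat\vU - \tilde\vU\tilde\vQ}_F$, and the inner sum is $\le \sum_k \tau_k$ because $\tilde n_k' = \sum_{i\in\cC_k\cap[n']}\psi_i^2$.

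The main obstacle, and the reason the lemma bounds $n' - |\check S'|$ rather than $n'' - |\check S'|$, is accounting for the zero rows of $\hat\vU$: indices in $[n']\setminus[n'']$ receive arbitrary labels in $\check\vZ$ and must be included in the miscount. For such an $i$, $\norm{\hat\vU_{i,*} - (\tilde\vU\tilde\vQ)_{i,*}}_2 = \norm{\tilde\vU_{i,*}}_2 = \psi_i/\sqrt{\tilde n_{z_i}'}$. A direct Cauchy--Schwarz, writing $1 = (\psi_i/\sqrt{\tilde n_{z_i}'}) \cdot (\sqrt{\tilde n_{z_i}'}/\psi_i)$, then yields $(n'-n'')^2 \le \norm{\hat\vU - \tilde\vU\tilde\vQ}_F^2 \cdot \sum_k\tau_k$ by the same grouping argument. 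Adding this to the bound on $|\check S|$ produces the asserted inequality with a constant $C(\gee) = 1 + 2\sqrt 2(2+\gee)$.
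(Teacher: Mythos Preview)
Your proposal is correct and follows essentially the same route as the paper: the row structure of $\tilde\vU^+$ via $\V{\Psi}_{[n'],*}\tilde\GD^{-1}\tilde\vR$, the triangle-inequality separation argument on $\check S'$, the approximate $K$-median bound combined with Cauchy--Schwarz to convert $\norm{\hat\vU^+-\tilde\vU^+}_{2,1}$ to $\norm{\hat\vU-\tilde\vU\tilde\vQ}_F$, and the separate Cauchy--Schwarz bound for the zero rows giving $(n'-n'')$. One small imprecision: your contradiction argument shows that the map from estimated labels to true labels is \emph{well-defined} on $\check S'$ (same estimated label forces same true label), not that it is injective; the paper supplies the reverse direction by a pigeonhole observation---since $\check\vZ^+\in\sM_{n'',K}$ there are at most $K$ distinct estimated rows, so once the $K$ true clusters are represented in $\check S'$ the correspondence must be a bijection---which is what ``correctly classified'' requires.
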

	Lemma \ref{CorrectClassify} together with \eqref{Frob_bd4} and \eqref{spectralnormbd4} implies that, on the event $\tilde\sA$,  the overall fraction of misclassified nodes   is at most
	\beq \label{total misclassify} 
	(n - |\check S'|)/n \le  e^{-(1-c)Tn\ga} + C\frac 1n\left(\sum_{k\in[K]} \tau_k\right)^{1/2}  \frac{\sqrt K}{\gc_n} (Tn\ga)^{3/4}. \eeq
	In order to estimate $\gc_n$, note that
	$\cC(\V{\Psi})^\perp \subset \cN(\tilde\vP)$. This together with the fact that $\V{\Psi}^T\V{\Psi}=\tilde\GD^2$ implies
	\beqax 
	\gamma_n & \ge & \min_{\vz\ne\mathbf 0: \vz\in \cC(\V{\Psi})} \frac{\vz^T\tilde\vP\vz}{\vz^T\vz} \ge\min_{\vx\in \R^K\setminus \{\mathbf 0\}} \frac{\vx^T\V{\Psi}^T\tilde\vP\V{\Psi} \vx}{\vx^T\V{\Psi}^T\V{\Psi} \vx}  \\
	& \ge &  \min_{\vy\in \R^K\setminus \{\mathbf 0\}} \frac{\vy^T\sum_{t\in[T]}\tilde\GD \vB^{(t)}\tilde\GD\vy}{\vy^T\vy} 
	\ge \sum_{t\in[T]} \min_{\vy\in \R^K\setminus \{\mathbf 0\}} \frac{\vy^T\tilde\GD\vB^{(t)}\tilde\GD\vy}{\vy^T\vy}.\eeqax
	Here we have used the change of variable $\vz=\V{\Psi}\vx$ and $\vy=\tilde\GD \vx$. Bounding each of the summands in the above lower bound by the corresponding smallest eigenvalue and using the definition of $\gl$   we get 
	\[ \gc_n \ge T\ga \min_{t\in[T]} \gl_1\left((\tilde\GD \frac 1\ga \vB^{(t)}\tilde\GD)\right)  
	\ge T \ga \tilde n'_{\text{min}}\gl.\]
	Plugging this bound for $\gamma_n$ in \eqref{total misclassify} we prove the estimate in \eqref{overall misclassify}. 
	
	If $(\psi_i)$ satisfy conditions (i) - (iii), then the fraction of misclassified nodes in \eqref{overall misclassify} is $o(1)$ with probability $1-o(1)$, provided $ Tn\ga \gl\to\infty$ and $\E\alpha_1^{-2}<\infty$. This completes the argument.
\end{proof}

\section{Conclusion and Future Works}
In this paper, we consider the dynamic stochastic block model with constant community memberships and changing connectivity matrices. We consider spectral clustering and spherical spectral clustering algorithms on aggregate versions of adjacency matrices, based on the sum of adjacency matrices. It is shown in the paper that under dynamic stochastic block model, spectral clustering based on the sum of squared adjacency matrices has guarantee of community recovery, under associative community structure. We also consider spherical spectral clustering based on the sum of adjacency matrices and give theoretical guarantee that the spherical spectral clustering method recovers associative community membership under dynamic degree-corrected block model. 

\subsection{Future Works}
Several extensions are possible from the current work. Some possible extensions of our work will include considering the cases when cluster memberships change with time and the dynamic behavior of the networks are more general, such as, dependence of adjacency matrices on edge structure and community memberships of previous time points. Methods for community recovery with theoretical guarantee are quite rare in the literature and it would be good to investigate such problems in later works.

\section*{Acknowledgements}
We thank Peter Bickel, Paul Bourgade, Ofer Zeitouni and Harrison Zhou for helpful discussions and comments.


\section*{Appendix}
\subsection{Proof of Lemma \ref{misclassify bd}} \label{misclassify bd proof}
\begin{proof}[Proof of Lemma \ref{misclassify bd}] Note that \eqref{eq_sum_ass} implies $rank(\vP)=K$, so applying \cite[Lemma 5.1]{lei2014consistency}  there is an orthogonal matrix $\vQ \in \R^{K \times K}$ such that 
	\begin{equation} \label{Frob_bd}
		||\hat \vU - \vU\vQ||_F \le \frac{2\sqrt{2K}}{\gamma_n}||\vA-\vP||.
	\end{equation}
	Letting $\vV:=\GD^{-1}\vR\vQ$ and noting that $\vV\vV^T=\GD^{-2}$, we see that $\vU\vQ=\vZ_{[n'],*} \vV$, where the rows of $\vV$ are orthogonal and $||\vV_{i,*}||_2=1/\sqrt{n_i'}$.   For $i \in [K]$ define $T_i := \{j \in [n']: \vZ_{j,i}=1,  ||\hat\vZ_{j,*}\hat\vX - \vZ_{j,*}\vV||_2\le ||\vV_{i,*}||/2\}$ and
	\[
	S_i := \{j \in [n']: \vZ_{j,i}=1, j\notin T_i\}, 
	\text{ so } ||\hat\vZ\hat X -\vZ_{[n'],*} \vV||_F^2 \ge \frac 14 \sum_{i=1}^K |S_i|\;||\vV_{i,*}||^2.
	\]
	Now using triangle inequality and \eqref{eq:kmeans}
	\begin{align*}
		& ||\hat\vZ\hat\vX - \vZ_{[n'],*}\vV||_F^2
		\le 2(||\hat\vZ\hat\vX - \hat\vU||_F^2 + || \vZ_{[n'],*}\vV-\hat\vU||_F^2)
		\le 2(2+\eps)||\vZ_{[n'],*}\vV-\hat\vU||_F^2,
	\end{align*}
	as $\vZ_{[n'],*}\in\sM_{n',K}$.  Since $\vZ_{[n'],*}\vV=\vU\vQ$, the~last~two~displays~and~\eqref{Frob_bd}~give
	\[
	\sum_{i=1}^K |S_i|\;||\vV_{i,*}||^2 \le 32K\gc_n^{-2}(4+2\eps) ||\vA-\vP||^2.
	\]
	Now note that
	whenever $j_1 \in T_{i_1}$ and $j_2 \in T_{j_2}$ for some $j_1, j_2 \in [n']$ and $i_1, i_2 \in [K]$ with $i_1 \ne i_2$,
	one must have $(\hat\vZ\hat X)_{j_1,*} \ne (\hat\vZ\hat X)_{j_2,*}$,  since otherwise
	one can use triangle inequality and the fact that $(\vZ\vV)_{j_l,*}=\vV_{i_l,*}, l=1, 2,$
	\beqax  \text{ to have}
	&&||\vV_{i_1,*}||^2 + ||\vV_{i_2,*}||^2 = ||\vV_{i_1,*} - \vV_{i_2,*}||^2  \le 2(||(\hat\vZ\hat\vX)_{j_1,*} - (\vZ\vV)_{j_1,*}|| \\
	&& \quad + ||(\hat\vZ\hat\vX)_{j_2,*} - (\vZ\vV)_{j_2,*}||)
	\le \frac 12(||\vV_{i_1,*}||^2 + ||\vV_{i_2,*}||^2),\eeqax
	which gives a contradiction.
	Also, whenever $j_1, j_2 \in T_i$ for some $j_1, j_2 \in [n']$ and $i \in [K]$,
	one must have $(\hat\vZ\hat X)_{j_1,*} = (\hat\vZ\hat X)_{j_2,*}$, since otherwise $\hat\vZ\notin \sM_{n',K}$. 
	
	Thus all nodes outside $\cup_{i=1}^K S_i$ are correctly classified.
	This completes the proof, as $||\vV_{i,*}||^2\ge 1/n_i$ for all $i\in[K]$.
\end{proof}

\subsection{Proof of Lemma \ref{CorrectClassify}}
\begin{proof}[Proof of Lemma \ref{CorrectClassify}]
	Recall the definition of $\tilde\vU, \V{\Psi}, \tilde\GD, \tilde \vR, \tilde \vQ$, and let $\tilde\vV:=\tilde\GD^{-1}\tilde\vR\tilde\vQ$.  Then, $\tilde\vV\tilde\vV^T=\tilde\GD^{-2}$, as $\tilde\vR$ and $\tilde\vQ$ are orthogonal, and $\tilde\vU\tilde\vQ=\V{\Psi}_{[n'],*}\tilde\vV$. This implies  
	$\norm{(\tilde\vU\tilde\vQ)_{i,*}}_2=\psi_i/\sqrt{\tilde n'_{z_i}}$. So, if $\tilde\vU^+$ consists of  normalized rows of $(\tilde\vU\tilde\vQ)_{[n''],*}$, then 
	\beqa \label{U^+def} 
	\tilde\vU^+ 
	&:=& Diag\left(\sqrt{\tilde n_{z_1}}, \ldots, \sqrt{\tilde n_{z_{n''}}}\right)  [Diag(\V{\psi})]^{-1}_{[n''],[n'']}\V{\Psi}_{[n''],*}\tilde\vV \notag \\
	&=& Diag\left(\sqrt{\tilde n_{z_1}}, \ldots, \sqrt{\tilde n_{z_{n''}}}\right) \vZ_{[n''],*} \tilde\vV=\vZ_{[n''],*} \tilde\GD\tilde\vV, \text{ and }\\
	\tilde\vU^+(\tilde\vU^+)^T &=& (\vZ\vZ^T)_{[n''],[n'']}  \text{ as } \tilde\vV\tilde\vV^T=\tilde\GD^{-2}.\notag
	\eeqa 
	Now recall the definition of $\check S$ and $\check S'$. 
	If $k, l \in \check S'$  satisfy $\vZ_{k,*} \ne \vZ_{l,*}$, then the fact $\tilde\vU^+(\tilde\vU^+)^T=(\vZ\vZ^T)_{[n''],[n'']}$ and triangle inequality  give
	\begin{align*}
		\sqrt 2
		& =  \norm{\tilde\vU^+_{k,*}  - \tilde\vU^+_{l,*}}_2  \le  \norm{\tilde\vU^+_{k,*}  - \check\vZ_{k,*} \check\vX}_2 + \norm{\tilde\vU^+_{l,*}  - \check\vZ_{l,*} \check\vX}_2 + \norm{\check\vZ_{k,*} \check\vX - \check\vZ_{l,*} \check\vX}_2  \\ 
		&<  \left(\frac{1}{\sqrt{2}} + \frac{1}{\sqrt{2}}\right) + \norm{\check\vZ_{k,*} \check \vX - \check\vZ_{l,*} \check \vX}_2 \le \sqrt 2  + \norm{\check\vZ_{k,*} \check \vX - \check\vZ_{l,*} \check \vX}_2.
	\end{align*}
	The above shows  $\check\vZ_{k,*} \ne \check\vZ_{l,*}$.
	On the other hand,  if $k, l \in \check S'$ are such that $\vZ_{k,*}=\vZ_{l,*}$,
	then $\check\vZ_{k,*} = \check\vZ_{l,*}$ must hold, because otherwise $\check\vZ^+\not\in\sM_{n'', K}$.
	In other words, all nodes with indices in $\check S'$ are correctly classified.
	In order to estimate $n'-|\check S'|=n'-n''+|\check S|$  note that 
	\begin{align*}
		& \norm{\check \vZ^+\check  \vX -\tilde\vU^+}_{2,1} \ge  \sum_{j\in \check S} \norm{\check \vZ^+_{j,*}\check  \vX -\tilde\vU^+_{j,*}}_2 \ge  \frac{|\check S|}{\sqrt 2} \text{ using the definition of $\check S$,} \\
		& \norm{\check \vZ^+\check  \vX -\tilde\vU^+}_{2,1} \le \norm{\check \vZ^+\check  \vX -\hat  \vU^+}_{2,1}+\norm{\hat  \vU^+ -\tilde\vU^+}_{2,1}\text{ using triangle inequality, and} \\
		& \norm{\check \vZ^+\check  \vX -\hat  \vU^+}_{2,1} \le (1+\eps) \min_{\V{\gC}\in\sM_{n'',K}, \vX\in\R^{K\times K}} \norm{\V{\gC} \vX -\hat  \vU^+}_{2,1} \le  (1+\eps) \norm{\tilde\vU^+ -\hat  \vU^+}_{2,1} \end{align*}
	using the definition of $(\check \vZ^+, \check  \vX)$  and noting that $\tilde\vU^+=\vZ_{[n''],*}\tilde\GD\tilde\vV$ (see \eqref{U^+def}) is a candidate for the above minimization problem.  The last three inequalities give  
	\[
	|\check S| \le \sqrt 2(2+\eps) \norm{\hat  \vU^+-\tilde\vU^+}_{2,1}.\]
	Now using a standard inequality that $||(\vx/||\vx||) - (\vy/||\vy||)|| \le 2||\vx-\vy||/||\vx||$ for any $\vx, \vy \ne \vzero$ (see e.g.~\cite[page 16]{lei2014consistency}), the Cauchy-Schartz inequality and the fact that $\tilde\vU\tilde\vQ=\V{\Psi}_{[n'],*}\tilde\vV$ we get
	\begin{align*}
		\norm{\hat  \vU^+-\tilde\vU^+}_{2,1} 
		& \le  2 \sum_{i\in[n']} \frac{\norm{\hat  \vU_{i,*} - (\tilde\vU\tilde\vQ)_{i,*}}_2}{\norm{(\tilde\vU\tilde\vQ)_{i,*}}_2}
		\le \norm{\hat  \vU - \tilde\vU\tilde\vQ}_F \left(\sum_{i\in[n']} \norm{(\V{\Psi}\tilde\vV)_{i,*}}_2^{-2}\right)^{1/2} \\
		(n'-n'')^2 & = \left(\sum_{i\in[n']} \mathbf 1(\hat  \vU_{i,*}=0)\norm{(\tilde\vU\tilde\vQ)_{i,*}}_2
		\norm{(\tilde\vU\tilde\vQ)_{i,*}}_2^{-1}\right)^2 \\ 
		& \le  \sum_{i\in[n']} \mathbf 1(\hat  \vU_{i,*}=0)\norm{(\tilde\vU\tilde\vQ)_{i,*}}_2^2
		\sum_{i\in[n']}\norm{(\tilde\vU\tilde\vQ)_{i,*}}_2^{-2}
		\le \norm{\hat  \vU-\tilde\vU\tilde\vQ}_F^2  \sum_{i\in[n']} \norm{(\V{\Psi}\tilde\vV)_{i,*}}_2^{-2}
	\end{align*}
	Now note that $||\tilde\vV_{k,*}||_2=1/\sqrt{\tilde n'_k}$ for each $k\in[K]$. So, after rearranging its summands,  the sum 
	\[ \sum_{i\in[n']} \norm{(\V{\Psi}\tilde\vV)_{i,*}}_2^{-2}  \le\sum_{k\in[K]} \sum_{i\in\cC_k} \psi_i^{-2}\tilde n'_k \le \sum_{k\in[K]} \left[\sum_{i\in \cC_k} \psi_i^2 \sum_{i\in \cC_k} \psi_i^{-2}\right] =\sum_{k\in[K]} \tau_k.\]
	The last four displays give
	$n'-|\check S'| \le C(\eps)  \left(\sum_{k\in[K]} \tau_k\right)^{1/2} \norm{\hat  \vU-\tilde\vU\tilde\vQ}_F$. 
\end{proof}

\bibliographystyle{imsart-nameyear}
\bibliography{Biom}

\end{document}